
\documentclass[preprint,12pt]{elsarticle}




\usepackage{amssymb}

\usepackage{graphicx}          

\usepackage{bm}
\usepackage{amsmath,amssymb}
\usepackage{mathrsfs}
\usepackage{amsfonts}
\usepackage{color}

\journal{Journal of the Franklin Institute}

\begin{document}

\begin{frontmatter}



\title{Closed-Loop Designed Open-Loop Control of Quantum Systems: An Error Analysis}


\author[inst1]{Shikun Zhang}
\author[inst1,inst2]{Guofeng Zhang}
\affiliation[inst1]{organization={Department of Applied Mathematics, The Hong Kong Polytechnic University},
            addressline={Hung Hom, Kowloon, Hong Kong Special Administrative Region of China}, 
            country={China}}

\affiliation[inst2]{organization={Shenzhen Research Institute, The Hong Kong Polytechnic University},
            city={Shenzhen},
            postcode={518057}, 
            country={China}}

\begin{abstract}
Quantum Lyapunov control, an important class of quantum control methods, aims at generating converging dynamics guided by Lyapunov-based theoretical tools. However, unlike the case of classical systems, disturbance caused by quantum measurement hinders direct and exact realization of the theoretical feedback dynamics designed with Lyapunov theory. Regarding this issue, the idea of closed-loop designed open-loop control has been mentioned in literature, which means to design the closed-loop dynamics theoretically, simulate the closed-loop system, generate control pulses based on simulation and apply them to the real plant in an open-loop fashion. Based on bilinear quantum control model, we analyze in this article the error, i.e., difference between the theoretical and real systems' time-evolved states, incurred by the procedures of closed-loop designed open-loop control. It is proved that the error at an arbitrary time converges to a unitary transformation of initialization error as the number of simulation grids between 0 and that time tends to infinity. Moreover, it is found that once the simulation accuracy reaches a certain level, adopting more accurate (thus possibly more expensive) numerical simulation methods does not efficiently improve convergence. We also present an upper bound on the error norm and an example to illustrate our results. 
\end{abstract}



\begin{keyword}
quantum Lyapunov control \sep closed-loop design \sep open-loop control \sep error analysis
\end{keyword}

\end{frontmatter}


\section{Introduction}
There has been intensive research on applying Lyapunov's wisdom to quantum control. Over the past two decades, Lyapunov-based quantum control has been studied in the context of manipulating quantum states \citep{KUANG200898,CHEN2017439,CONG20209220,MIRRAHIMI20051987,BEAUCHARD2007388,PhysRevA.96.033803,PhysRevA.102.022603,8481533}; and quantum gates \citep{doi:10.1080/00207179.2016.1161830,HOU2014699,NOURALLAH2022}. Moreover, Lyapunov-related theories have been applied to investigate the stability of quantum systems \citep{5437308,doi:10.1063/1.4884300,6697323,8264475,EMZIR2022110660}, and also quantum filtering \citep{GE20121031,doi:10.1137/100801287}.

Although Lyapunov methods are often linked with the notion of feedback regarding classical systems, ``quantum Lyapunov control", however, often refers to open-loop control strategies in literature. We shall give an explanation herein. Theory of quantum physics dictates that the dynamics of quantum systems be linear in the Schr\"odinger picture. Since linear quantum dynamics is limited in what it can achieve, it is hoped that nonlinearity may be introduced via feedback protocols. Similar to the classical scenario, when we have at hand a system of differential equations that describe a quantum system (i.e., deterministic master equations), we may design feedback protocols which fulfill a certain goal guided by Lyapunov theory, resulting in a theoretical, closed-loop and possibly nonlinear dynamical system (also a deterministic master equation). However, different from the classical scenario, this theoretical system may \textit{not} be possible to realize in practice. This is because feedback protocols, if viewed as a function of the system's real-time state, requires the acquisition of the system's state in real-time when being implemented. However, quantum mechanics says that measurement causes non-negligible disturbance to the system, thus altering the deterministic master equation theoretically designed.

Regarding this issue, a practical solution has been mentioned in literature, e.g., \citep{DONG2022,1272601,KUANG2017164,KUANG2021109327,ZHAO20121833,PhysRevA.86.022321}. The idea is to first simulate the closed-loop system in a computing device, then generate control pulses based on simulation results, and finally apply the pre-calculated control pulses to the real plant in an open-loop fashion. It is now clear how quantum Lyapunov control relates to open-loop stategy and what we mean by ``\textit{closed-loop designed open-loop control}" in the title: the control pulses are designed based on theoretical closed-loop models but actuated in real open-loop plants.

At this point, we make a detour to give a few remarks on quantum Lyapunov control. This detour, despite being not too closely related to the theoretical results in this article, provides a proof of significance for our work. We ask the following question: what role does this kind of open-loop control play? In terms of control theory and practice, the feedback mechanism seems to be a more reliable option compared with open-loop control. However, the special feature of quantum systems decides that implementing feedback is not free even for systems with modest scales. It remains an experimental challenge to apply effective feedback to large-scale quantum systems, e.g., arrays comprising of thousands of qubits. Therefore, in the near future, open-loop control is expected to remain as a practical compromise. Since we are making this compromise, it seems only natural to develop more techniques to enhance the utility of open-loop quantum control. Quantum Lyapunov control intersects with such an endeavor. Supported by Lyapunov theory, open-loop control pulses that aim at approximating converging dynamics may be generated, which may find applications in tasks including quantum state preparation.

The closed-loop design stage of quantum Lyapunov control has been intensively studied, e.g., \citep{KUANG200898,CHEN2017439,CONG20209220,MIRRAHIMI20051987,doi:10.1080/00207179.2016.1161830}. However, to the best of our knowledge, the above mentioned ``simulate-generate-apply" procedure still needs investigation with mathematical rigor. A major question naturally arises: since the real plant is, in fact, under open-loop control, then how does the real time-evoluted states differ from the ones associated with the theoretical feedback system? This is exactly what we concern here.

In this article, we analyze the error between the closed-loop designed theoretical feedback system and the real system driven by open-loop control input generated from simulation results. We focus on the following three factors: 1) the difference between the real and theoretical initial state; 2) the accuracy of the numerical approximation methods applied to simulate the theoretical feedback system; 3) the number of simulation grids, and study how they affect the error under concern. Our work is expected to provide useful guidance to the implementation of closed-loop designed open-loop quantum control, since it is essential to know how reliable a control method is prior to its application.

The rest of this article is organized as follows. Section 2 introduces some preliminaries and formulates the error analysis problem. Section 3 contains three theorems on error analysis. Section 4 presents an example and Section 5 concludes this article.

\section{Preliminaries and Problem Formulation}
In this section, we introduce some preliminaries and present the mathematical formulation of the error analysis problem. 

Consider a Hilbert space $\mathcal{H} \triangleq \mathbb{C}^{N_S}$, $N_S \geq 2$. Let $\mathcal{B}(\mathcal{H})$ be the set of linear operators on $\mathcal{H}$, and $\mathcal{D}(\mathcal{H}) \subset \mathcal{B}(\mathcal{H})$ be the set of positive semi-definite, trace-one linear operators on $\mathcal{H}$. An element $\rho \in \mathcal{D}(\mathcal{H})$ is called a ``density operator" or a ``density matrix", which represents a quantum state. \textcolor{blue}{In this article, i denotes the imaginary unit. $[\cdot,\cdot]$ stands for the commutator between two linear operators. That is, for linear operators $A$ and $B$, $[A,B] \triangleq AB-BA$.}

The dynamics of natural closed quantum systems can be described by the Liouville equation, which is written as:
\begin{equation} \label{lineareq}
   \dot{\rho}= -\text{i}[H,\rho],
\end{equation}
where $H \in \mathcal{B}(\mathcal{H})$ is a Hermitian operator called the ``Hamiltonian". Its time-varying form is written as:
\begin{equation} \label{lineareqtvna}
   \dot{\rho}= -\text{i}[H(t),\rho].
\end{equation}
By ``natural", we refer to quantum systems which evolve without human intervention. The Hamiltonians which describe the internal energy of such systems are specified by Nature. The dynamics induced by (\ref{lineareq}) is linear. Also, any evolution trajectory $\rho (t)$ governed by (\ref{lineareq}) which starts from an element in $\mathcal{D}(\mathcal{H})$ remains in $\mathcal{D}(\mathcal{H})$. 

Instead of having $H$ specified by Nature, we may control the evolution of a quantum system by designing a specific Hamiltonian $H_{\text{spec}}$. We may even modulate system Hamiltonians so that they vary with time, resulting in equations with the following form:
\begin{equation} \label{lineareqtv}
   \dot{\rho}= -\text{i}[H_{\text{spec}}(t),\rho].
\end{equation}
A specific choice of $H_{\text{spec}}(t)$ is expressed as:
\begin{equation} \label{openloop}
    H_{\text{spec}}(t)=H_0 + \sum_{k=1}^M v_k (t) H_k ,
\end{equation}
where the Hermitian operator $H_0 \in \mathcal{B}(\mathcal{H})$ is called the ``drift Hamiltonian" and Hermitian operators $H_1,...,H_M \in \mathcal{B}(\mathcal{H})$ are called ``control Hamiltonians". The real functions $v_k (\cdot)$, $1 \leq k \leq M$, are control inputs to be designed. Eq. (\ref{lineareqtv}) with time-varying Hamiltonian (\ref{openloop}) is referred to as the \textit{bilinear control model of closed quantum systems}.  

However, the dynamics of eq. (\ref{lineareqtv}) is still linear. Also, from the perspective of Systems and Control, eq. (\ref{lineareqtv}) depicts an open-loop system without feedback. Since it is acknowledged that the linear dynamics of open-loop systems may be restricted in functionality, we may wish to design feedback systems of the following form:
\begin{equation}\label{feedback}
    \dot{\rho}= -\text{i}[H_0 + \sum_{k=1}^M u_k (\rho) H_k,\rho],\quad \rho (0)=\rho_0,
\end{equation}
where function(s) $u_k (\cdot)$, $1 \leq k \leq M$, is (are) from $\mathcal{B}(\mathcal{H})$ to $\mathbb{R}$. It is clear from (\ref{feedback}) that, by designing feedback protocol(s) $u_k (\cdot)$, $1 \leq k \leq M$, more versatile dynamical trajectories (e.g., nonlinear dynamics) can be drawn in theory. By employing Lyapunov techniques, we may design sophisticated feedback protocols that result in evolution trajectories which we desire, for example, trajectories that converge to a target quantum state in the long time limit or finite time. 

However, quantum mechanics brings bad news: chances are high that the dynamics we design \textit{in theory} can only exist \textbf{in theory}. By examining (\ref{feedback}), it is seen that real-time information about the system's state needs to be fed into feedback protocols, which further indicates that real-time measurement of the system is essential. However, quantum mechanics says that measuring a quantum system may cause non-neligible disturbance to it, which in turn indicates that system (\ref{feedback}) may not be realized in practice if feedback protocols are nontrivial (i.e., not constant functions). 

In terms of this issue, the idea of ``closed-loop designed open-loop control" has been mentioned in literature as a practical solution, e.g., see \citep{DONG2022,1272601,KUANG2017164,KUANG2021109327,ZHAO20121833,PhysRevA.86.022321}. We mathematically formulate this idea, which paves the way for subsequent error analysis.

We believe that theoretically designed control protocols are meant to be applied in the real world, and all real-world protocols are executed over a finite-time interval. Let $T>0$ be a finite real number, and $[0,T]$ be the time interval during which control actions on the plant system are to be exerted. The procedure goes as follows.

\textbf{Step 1.} System (\ref{feedback}) is simulated in a digital device with a predetermined numerical approximation method. The interval $[0,T]$ is divided into $N$ intervals, i.e., $[0,T]=\bigcup_{n=0}^{N-2}[\frac{nT}{N},\frac{(n+1)T}{N})\bigcup [\frac{(N-1)T}{N},T]$. Let us denote $t_n \triangleq \frac{nT}{N}$. The numerical method (e.g., Euler method, fourth-order Runge-Kutta method, or a linear multistep method) generates a set of discrete data points $\{\theta_j\}_{j=0}^N$, with $\theta_0=\rho_0$ and $\theta_j$ approximating the solution of (\ref{feedback}) at time $t_j$, $1\leq j\leq N$. (For example, Euler method works in the following way: $\theta_{n+1}=\theta_n - \text{i}\frac{T}{N}[H_0+\sum_{k=1}^M u_k(\theta_n)H_k,\theta_n]$, $0 \leq n \leq N-1$.)

\textbf{Step 2.} Based on $\{\theta_j\}_{j=0}^N$, $M$ control function(s) $\tilde{u}_1 (\cdot),...,\tilde{u}_M (\cdot)$ on $[0,T]$ are generated, with $\tilde{u}_j (t)=u_j (\theta_n)$, $t \in [t_n, t_{n+1})$, $0 \leq n \leq N-1$, and $1 \leq j \leq M$. 

\textbf{Step 3.} The control functions are applied to the real-world plant system with initial state $\sigma_0$ in an open-loop fashion. Let $\sigma_N (t)$ be the state trajectory of the plant system. It holds that $\sigma_N (0)=\sigma_0$ and 
\begin{equation}\label{realsystem}
    \dot{\sigma_N}=-\text{i}[H_0+\sum_{k=1}^M \tilde{u}_k H_k,\sigma_N],
\end{equation}
on each interval $(t_n,t_{n+1})$, $0\leq n \leq N-1$.

As can be seen, system (\ref{feedback})(\textit{theory}) is nonlinear if feedback protocol(s) is (are) nontrivial, while system (\ref{realsystem}) (\textit{practice}) must be linear (though possibly time-varying) since the control input(s) is (are) predetermined. It is clear that the dynamics of the two systems cannot always agree with each other. Let us denote:
\begin{equation}\label{Error}
    e_N (t)=\rho(t;\rho_0,0)-\sigma_N (t),\quad t \geq 0,
\end{equation}
which represents the error we will be analysing in the next section. In (\ref{Error}), $\rho(t;\rho_0,0)$ denotes the solution of (\ref{feedback}). From Steps 1-3, the following three factors are extracted: 1) $\rho_0-\sigma_0$, which is the initial error; 2) accuracy of the numerical approximation method applied; and 3) $N$, which shows how refined the partition of the entire simulation time slot is (and thus how heavy we invest into computation). We are interested in analysing how these three factors affect the error $e_N$. 

We now close this section by a brief summary. To achieve a certain goal, feedback protocol(s) is (are) designed, resulting in a desired \textbf{theoretical} evolution trajectory governed by (\ref{feedback}). As much as we hope to see that trajectory in practice, quantum mechanics has set its bars, and we thus resort to ``closed-loop designed open-loop control" specified by Steps 1-3. The error $e_N$ incurred by these steps is what we concern in this article.

\section{Error Analysis}
The error $e_N$ is analyzed in this section. Before presenting the main results, some preparation works are in order. 

It is first assumed that $u_k(\cdot)$ is a $C^2$-differentiable mapping from $\mathcal{B}(\mathcal{H})$ to $\mathbb{R}$, for $1 \leq k \leq M$. Then, let us denote $g_k^{\rho_0}(t)\triangleq u_k\big(\rho(t;\rho_0,0)\big)$, $t\geq 0$, $1 \leq k \leq M$. Consider the following dynamical system:
\begin{equation}\label{linearlocal}
    \dot{\omega}=-\text{i}[H_{\rho_0}(t),\omega],
\end{equation}
where
\begin{equation}\label{tvhamiltonian}
    H_{\rho_0}(t)\triangleq H_0+\sum_{k=1}^M g_k^{\rho_0}(t)H_k,\quad t \geq 0.
\end{equation}
We define the superoperator $\mathcal{U}_{\rho_0}[t,\tau]$ with two real operands as the state transition superoperator of system (\ref{linearlocal}). Moreover, the following notation is made:
\begin{equation}\label{Herror}
    E(N,n)\triangleq \sum_{k=1}^{M}\Bigg(g_k^{\rho_0}\bigg(\frac{nT}{N}\bigg)-\tilde{u}_k \bigg(\frac{nT}{N}\bigg)\Bigg)H_k.
\end{equation}
\textcolor{blue}{It can be verified that $E(N,n)=H_{\rho_0}(\frac{nT}{N})-\big(H_0+\sum_{k=1}^M \tilde{u}_k (\frac{nT}{N})H_k\big)$. On one hand, $H_{\rho_0}(\cdot)$ is the theoretically designed Hamiltonian function. On the other hand, $H_0+\sum_{k=1}^M \tilde{u}_k (\cdot)H_k$ is the Hamiltonian function of the real system which depends on $\tilde{u}_k$: the control input determined by simulation results (Step 2). $E(N,n)$ thus stands for the difference between two Hamiltonian functions at time $n \frac{T}{N}$. Therefore, $\Vert E(N,n) \Vert$ reflects the accuracy of simulation, which may be improved by consuming more computational resources.}

Finally, $\forall A \in \mathcal{B}(\mathcal{H})$, we denote
\begin{equation}\label{norm}
    \Vert A \Vert \triangleq \sqrt{\text{tr}(A^\dagger A)}.
\end{equation}

\subsection{Error Asymptotic Analysis as $N$ Tends to Infinity}
We are now in the position to present the first result.

\newtheorem{thm}{Theorem}
\begin{thm}
$\forall T>0$ fixed, if $\exists M_1 >0$, such that \textcolor{blue}{$\Vert E(N,n)\Vert \leq \frac{M_1 T}{N}$} holds $\forall N \in \mathbb{N}^+$ and $0\leq n \leq N-1$, then
\begin{equation}\label{s1}
    \lim_{N \to +\infty} e_N (T)=\mathcal{U}_{\rho_0}[T,0](\rho_0-\sigma_0).
\end{equation}
\end{thm}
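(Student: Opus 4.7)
The plan is to isolate the initialization-error part $\mathcal{U}_{\rho_0}[T,0](\rho_0-\sigma_0)$ via an auxiliary trajectory, and then show that the remaining discrepancy, caused purely by the piecewise-constant approximation of the theoretical Hamiltonian, vanishes as $N\to\infty$. Define the auxiliary trajectory $\alpha_N(t)\triangleq \mathcal{U}_{\rho_0}[t,0]\sigma_0$, i.e., the evolution of the real initial state $\sigma_0$ driven by the \emph{theoretically designed} time-varying Hamiltonian $H_{\rho_0}(t)$ from \eqref{tvhamiltonian}. Then decompose
\begin{equation*}
e_N(T) \;=\; \bigl[\rho(T;\rho_0,0)-\alpha_N(T)\bigr] \;+\; \bigl[\alpha_N(T)-\sigma_N(T)\bigr].
\end{equation*}
By linearity of \eqref{linearlocal} in the initial condition, the first bracket is exactly $\mathcal{U}_{\rho_0}[T,0](\rho_0-\sigma_0)$, independent of $N$. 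It thus remains to prove $\delta_N(T)\triangleq \alpha_N(T)-\sigma_N(T)\to 0$.

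Next, I would write a differential equation for $\delta_N$. On each subinterval $(t_n,t_{n+1})$, subtracting \eqref{realsystem} from the defining equation of $\alpha_N$ yields
\begin{equation*}
\dot\delta_N \;=\; -\mathrm{i}[H_{\rho_0}(t),\delta_N] \;-\; \mathrm{i}[\tilde E_N(t),\sigma_N],
\qquad \tilde E_N(t)\triangleq H_{\rho_0}(t)-H_0-\sum_{k=1}^M \tilde u_k(t)H_k,
\end{equation*}
with $\delta_N(0)=0$. Split $\tilde E_N(t)=\bigl[H_{\rho_0}(t)-H_{\rho_0}(t_n)\bigr]+E(N,n)$ on $[t_n,t_{n+1})$. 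The hypothesis bounds the second summand by $M_1T/N$. For the first summand, the $C^2$ assumption on $u_k$ together with the smoothness of $\rho(\cdot;\rho_0,0)$ (ODE solution with smooth right-hand side on the compact interval $[0,T]$) gives a Lipschitz constant $L>0$ for $H_{\rho_0}(\cdot)$, so this piece is bounded by $LT/N$. Hence $\sup_{t\in[0,T]}\|\tilde E_N(t)\|\le (L+M_1)T/N$.

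To convert this into a bound on $\|\delta_N(T)\|$, I would run an energy/Gronwall estimate in the Hilbert--Schmidt norm \eqref{norm}. Because $H_{\rho_0}(t)$ is Hermitian, the superoperator $-\mathrm{i}[H_{\rho_0}(t),\cdot]$ is skew-adjoint with respect to $\langle A,B\rangle=\mathrm{tr}(A^\dagger B)$, so it contributes nothing to $\tfrac{d}{dt}\|\delta_N\|^2$. Using $\|\sigma_N(t)\|\le 1$ (density matrix) and a standard commutator bound, one obtains $\tfrac{d}{dt}\|\delta_N\|\le C\,\|\tilde E_N(t)\|$ for some constant $C$ depending only on the dimension and $\max_k\|H_k\|$. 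Integrating,
\begin{equation*}
\|\delta_N(T)\| \;\le\; C\int_0^T \|\tilde E_N(t)\|\,dt \;\le\; \frac{C(L+M_1)T^2}{N} \;\xrightarrow[N\to\infty]{}\;0,
\end{equation*}
which combined with the decomposition above yields \eqref{s1}.

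The main obstacle I anticipate is the Lipschitz bound on $H_{\rho_0}$: it requires showing that the composite map $t\mapsto u_k(\rho(t;\rho_0,0))$ is Lipschitz on $[0,T]$, which in turn needs a uniform a priori bound on $\dot\rho$ (straightforward from boundedness of $H_{\rho_0}$ on $[0,T]$, but needs justification that this Hamiltonian is itself bounded — a short bootstrap argument using continuity of $u_k$ and compactness of $[0,T]$). Every other ingredient (linearity of the propagator, skew-adjointness of the commutator, Gronwall) is routine.
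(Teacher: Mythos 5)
Your proof is correct, and it reaches the result by a route that differs from the paper's in its key mechanism. The first step is in substance the same: peeling off $\mathcal{U}_{\rho_0}[T,0](\rho_0-\sigma_0)$ via the auxiliary trajectory $\alpha_N(t)=\mathcal{U}_{\rho_0}[t,0]\sigma_0$ is equivalent to the Duhamel formula the paper establishes by induction, $e_N(T)=\mathcal{U}_{\rho_0}[T,0](\rho_0-\sigma_0)+\int_0^T\mathcal{U}_{\rho_0}[T,s]\big(-\mathrm{i}[H_{\rho_0}(s)-H_N(s),\sigma_N(s)]\big)\,ds$, whose integral term is exactly your $\delta_N(T)$. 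Where you genuinely diverge is in showing that this residual vanishes. The paper Taylor-expands the integrand $G_N(s)$ on each subinterval with a Peano remainder and separately bounds the zeroth-order Riemann sum (using $\Vert E(N,n)\Vert\le M_1T/N$ and unitarity of the propagator), the first-derivative sum, and the remainder sum --- machinery whose real payoff is that it isolates the exact first-order coefficient needed later in Theorem 2. You instead run a differential inequality on $\Vert\delta_N\Vert$, exploiting skew-adjointness of $-\mathrm{i}[H_{\rho_0}(t),\cdot]$ with respect to $\langle A,B\rangle=\mathrm{tr}(A^\dagger B)$ so that no Gronwall exponential is needed, and conclude $\Vert\delta_N(T)\Vert\le 2\int_0^T\Vert\tilde E_N(t)\Vert\,dt\le 2(L+M_1)T^2/N\to 0$; the splitting $\tilde E_N(t)=[H_{\rho_0}(t)-H_{\rho_0}(t_n)]+E(N,n)$ correctly separates the discretization of the theoretical Hamiltonian (controlled by Lipschitz continuity) from the simulation error (controlled by the hypothesis). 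This is shorter and cleaner for Theorem 1 taken in isolation, at the price of not setting up the refined asymptotics of Theorem 2. The obstacle you flag --- Lipschitz continuity of $t\mapsto H_{\rho_0}(t)$ on $[0,T]$ --- is real but resolves exactly as you sketch, and the paper assumes the corresponding facts explicitly by positing bounds $P_0,P_1$ on $\Vert H_{\rho_0}(s)\Vert$ and $\Vert\dot H_{\rho_0}(s)\Vert$.
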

\newproof{proof}{Proof}
\begin{proof}
Let us denote 
\begin{equation}
    h\triangleq \frac{T}{N}.
\end{equation}
Now consider the solution $\rho(\rho_0;t,0)$ of system (\ref{feedback}). This solution satisfies:
\begin{equation}\label{solutionequal}
    \frac{d}{dt}\rho(\rho_0;t,0)=-\text{i}[H_0+\sum_{k=1}^M g_k^{\rho_0}(t)H_k,\rho(\rho_0;t,0)].
\end{equation}
Eq. (\ref{solutionequal}) indicates that $\rho(\rho_0;t,0)$ is also a solution of (\ref{linearlocal}) that passes through $(\rho_0,0)$. Therefore, for $t\in \big[nh,(n+1)h\big)$, $0\leq n \leq N-1$, we have
\begin{equation}\label{addminus}
    \begin{aligned}
    \frac{d}{dt}\big(\rho(\rho_0;t,0)\!-\!\sigma_N (t)\big)&=-\text{i}[H_{\rho_0}(t),\rho(\rho_0;t,0)]+\text{i}[H_N^n,\sigma_N]\\
    &=-\text{i}[H_{\rho_0}\!(t),\rho(\rho_0;t,0)-\sigma_N]-\text{i}[H_{\rho_0}\!(t)-H_N^n,\sigma_N],
    \end{aligned}
\end{equation}
where $H_N^n \triangleq H_0+\sum_{k=1}^M \tilde{u}_k (nh)H_k$, $0 \leq n \leq N-1$. As a result, for $0\leq n \leq N-1$,
\begin{multline}\label{induction1}  
    e_N \!((n\!+\!1)h)\!=\mathcal{U}_{\rho_0}[(n+1)h,nh](e_N(nh))\\
    +\!\int_{nh}^{(n+1)h}\! \mathcal{U}_{\rho_0}[(n\!+\!1)h,\!s](-\text{i}[H_{\rho_0}\!(s)\!-\!H_N^n,\!\sigma_N\!(s)])ds.
\end{multline}
Next, we prove by mathematical induction that 
\begin{equation}\label{induction2}
    e_N \!(T)\!=\mathcal{U}_{\rho_0}[T,0](\rho_0-\sigma_0)+\!\int_{0}^{T}\! \mathcal{U}_{\rho_0}[T,s](-\text{i}[H_{\rho_0}\!(s)\!-\!H_N (s),\!\sigma_N\!(s)])ds,   
\end{equation}
where $H_N (s)\triangleq H_0+\sum_{k=1}^M \tilde{u}_k (s)H_k$, $s \in [0,T)$.
From (\ref{induction1}), it is clear that
\begin{equation}\label{induction3}   
    e_N \!(h)=\mathcal{U}_{\rho_0}[h,0](\rho_0-\sigma_0)+\!\int_{0}^{h}\! \mathcal{U}_{\rho_0}[h,\!s](-\text{i}[H_{\rho_0}\!(s)-H_N (s),\!\sigma_N\!(s)])ds.
\end{equation}
Suppose that, for $1 \leq J \leq N-1$, it holds that 
\begin{equation}\label{induction4}
    e_N \!(Jh)=\mathcal{U}_{\rho_0}[Jh,0](\rho_0-\sigma_0)+\!\int_{0}^{Jh}\! \mathcal{U}_{\rho_0}[Jh,\!s](-\text{i}[H_{\rho_0}\!(s)\!-\!H_N(s),\!\sigma_N\!(s)])ds.
\end{equation}
Then, by (\ref{induction1}), we have
\begin{equation}\label{induction5}
    \begin{aligned}
    &e_N ((J+1)h)\\
    &=\mathcal{U}_{\rho_0}[(J+1)h,0](\rho_0-\sigma_0)+\!\int_{0}^{(J\!+\!1)h}\! \mathcal{U}_{\rho_0}[(J\!+\!1)h,\!s](-\text{i}[H_{\rho_0}\!(s)\!-\!H_N (s),\!\sigma_N\!(s)])ds.
    \end{aligned}
\end{equation}
Since $T=Nh$, it has been shown by mathematical induction that (\ref{induction2}) holds.

For $s \in [0,T)$, we denote
\begin{equation}
    G_N (s)\triangleq \mathcal{U}_{\rho_0}[T,\!s]\big(-\text{i}[H_{\rho_0}(s)-H_N (s),\sigma_N (s)]\big).
\end{equation}
Then, for $s\in [nh,(n+1)h)$, $0\leq n\leq N-1$, by Taylor's theorem with the Peano form of the remainder,
\begin{equation}\label{firstorder}
    G_N (s)= G_N(nh)+\dot{G}_N (nh^+)(s-nh)+R_{n,n+1}^N (s) (s-nh),
\end{equation}
where
\begin{equation}
    \lim_{s\to nh^+} R_{n,n+1}^N (s)=0.
\end{equation}
From (\ref{induction2}) and (\ref{firstorder}), we have
\begin{equation}\label{thm1main}
    \begin{aligned}
    &e_N (T)-\mathcal{U}_{\rho_0}[T,0](\rho_0-\sigma_0)\\
    &=\sum_{n=0}^{N-1}G_N(nh)h+\frac{1}{2}\dot{G}_N (nh^+)h^2+\int_{nh}^{(n+1)h}\!R_{n,n+1}^N (s) (s-nh)ds.
    \end{aligned}
\end{equation}
The terms on the r.h.s. of the last equation in (\ref{thm1main}) are analysed separately. Firstly,
\begin{equation}\label{thm1main1}
    \begin{aligned}
    \Vert\sum_{n=0}^{N-1}G_N(nh)h\Vert&=\Vert\sum_{n=0}^{N-1}\mathcal{U}_{\rho_0}[T,nh](-\text{i}[E(N,n),\sigma_N (nh)])\Vert h\\
    &\leq \sum_{n=0}^{N-1}\Vert\mathcal{U}_{\rho_0}[T,nh](-\text{i}[E(N,n),\sigma_N (nh)])\Vert h\\
    &=\sum_{n=0}^{N-1}\Vert[E(N,n),\sigma_N (nh)]\Vert h\\
    &\leq 2\sum_{n=0}^{N-1}\Vert E(N,n)\Vert h\\
    &\leq 2M_1 h T,
    \end{aligned}
\end{equation}
where we have applied the unitarity of superoperator $\mathcal{U}_{\rho_0}[\cdot,\cdot]$. From (\ref{thm1main1}), we have
\begin{equation}\label{thm1main1c}
    \lim_{N\to+\infty}\sum_{n=0}^{N-1}G_N(nh)h=0.
\end{equation}

Secondly, it is shown in Appendix A that the following equation holds:
\begin{equation}\label{appendixshown}
    \frac{d}{ds}\big(\mathcal{U}_{\rho_0}[T,s](A(s))\big)=\mathcal{U}_{\rho_0}[T,s]\big(\text{i}[H_{\rho_0}(s),A(s)]+\dot{A}(s)\big),
\end{equation}
where $A(\cdot)$ is a continuously differentiable matrix function. For $s \in \big(nh,(n+1)h\big)$, $0 \leq n \leq N-1$, we derive that:
\begin{equation}\label{help26}
    \begin{aligned}
    \frac{d}{ds}\big(-\text{i}[H_{\rho_0}(s)-H_N^n,\sigma_N (s)]\big)&=-\text{i}\big([\dot{H}_{\rho_0}(s),\sigma_N (s)]+[H_{\rho_0}(s)-H_N^n,\dot{\sigma}_N (s)]\big)\\
    &=-\text{i}[\dot{H}_{\rho_0}(s),\sigma_N (s)]-\big[H_{\rho_0}(s)-H_N^n,[H_N^n,\sigma_N (s)]\big].
    \end{aligned}
\end{equation}

With the aid of (\ref{appendixshown}) and (\ref{help26}), we arrive at:
\begin{multline}\label{1derivative}
    \dot{G}_N (nh^+)=\mathcal{U}_{\rho_0}[T,nh]\bigg(\big[H_{\rho_0}(nh),[E(N,n),\sigma_N (nh)]\big]\\
    -\text{i}[\dot{H}_{\rho_0}(nh),\sigma_N (nh)]-\big[E(N,n),[H_N^n,\sigma_N (nh)]\big]\bigg).
\end{multline}
Since we have assumed the second-order continuous differentiability of feedback protocols, provided with boundedness of the system trajectory, there exist $P_0 >0$ and $P_1 >0$, such that $\Vert H_{\rho_0}(s)\Vert \leq P_0$ and $\Vert \dot{H}_{\rho_0}(s) \Vert \leq P_1$ hold for $s \geq 0$. Moreover, $\forall N \in \mathbb{N}^+$ and $0 \leq n \leq N-1$, 
\begin{equation}\label{28}
    \begin{aligned}
    \Vert H_N^n \Vert&=\Vert H_N^n-H_{\rho_0}(nh)+H_{\rho_0}(nh) \Vert \leq \Vert H_N^n-H_{\rho_0}(nh)\Vert+\Vert H_{\rho_0}(nh)\Vert\\
                      &\leq M_1 h+P_0.
    \end{aligned}
\end{equation}
Next, it is clear that   
\begin{equation}\label{thm1main2}
    \begin{aligned}
    &\Vert\sum_{n=0}^{N-1}\dot{G}_N (nh^+)h^2\Vert\\
    &\leq \sum_{n=0}^{N-1}\big(4\Vert H_{\rho_0}(nh) \Vert\cdot \Vert E(N,n) \Vert + 2 \Vert \dot{H}_{\rho_0}(nh) \Vert+4\Vert H_N^n \Vert\cdot \Vert E(N,n) \Vert \big)h^2\\
    &\leq \sum_{n=0}^{N-1} \big(4P_0 M_1 h+2 P_1+ 4(M_1h +P_0)M_1 h\big)h^2\\
    &=8P_0 M_1 T h^2+ 2P_1 Th +4M_1^2 T h^3.
    \end{aligned}
\end{equation}
From (\ref{thm1main2}), it is clear that 
\begin{equation}\label{thm1main2c}
   \textcolor{blue}{ \lim_{N \to +\infty} \sum_{n=0}^{N-1}\frac{1}{2}\dot{G}_N (nh^+\!)h^2=0.}
\end{equation}

Thirdly, it follows from direct computation that
\begin{equation}\label{remainder}
    R_{n,n+1}^N (s)=\frac{G_N (s)-G_N (nh)}{s-nh}+\dot{G}_N (nh^+),
\end{equation}
where $s \in (nh,nh+h)$, $0 \leq n \leq N-1$. Also, $\exists s_{ij}^{N,n} \in \mathbb{R}$, $1 \leq i,j \leq N_S$, such that
\begin{equation}\label{u11}
    \Bigg(\frac{G_N (s)-G_N (nh)}{s-nh}\Bigg)_{ij}=\dot{(G_N)_{ij}}|_{s_{ij}^{N,n}},
\end{equation}
$\forall N \in \mathbb{N}^+$, $0 \leq n \leq N-1$, $s \in \big(nh, (n+1)h\big)$, with $nh < s_{ij}^N \leq s$.
From (\ref{1derivative}), it is derived that
\begin{equation}\label{u12}
    \Vert\dot{G}_N (nh^+)\Vert \leq 8P_0 M_1 T+ 2P_1 +4(M_1 T)^2,
\end{equation}
$\forall N \in \mathbb{N}^+$, $0 \leq n \leq N-1$. Also, from (\ref{appendixshown}) and (\ref{28}), we arrive at
\begin{equation}\label{u13}
    \Vert\dot{G}_N (s)\Vert \leq 4(M_1 T + 2P_0)^2 +2 P_1,
\end{equation}
$\forall N \in \mathbb{N}^+$, $0 \leq n \leq N-1$, and $s \in \big(nh, (n+1)h\big)$. 

Combining (\ref{u11}), (\ref{u12}), and (\ref{u13}), it is seen that there exists $R>0$, such that
\begin{equation}\label{u14}
    \Vert R_{n,n+1}^N (s)\Vert \leq R,
\end{equation}
$\forall N \in \mathbb{N}^+$, $0 \leq n \leq N-1$, and $s \in \big(nh, (n+1)h\big)$. Consequently, it holds that
\begin{equation}\label{thm1main3}
    \begin{aligned}
    \Vert \sum_{n=0}^{N-1}\int_{nh}^{nh+h}R_{n,n+1}^N (s)(s-nh)ds\Vert &\leq \sum_{n=0}^{N-1}\Vert \int_{0}^{h}R_{n,n+1}^N (u+nh)udu  \Vert\\
    &\leq \sum_{n=0}^{N-1} \int_{0}^{h} \Vert R_{n,n+1}^N (u+nh) \Vert u du\\
    &\leq \frac{R}{2}\sum_{n=0}^{N-1} h^2=\frac{R}{2}hT,
    \end{aligned}
\end{equation}
which leads to:
\begin{equation}\label{thm1main3c}
    \lim_{N \to +\infty} \sum_{n=0}^{N-1}\int_{nh}^{nh+h}R_{n,n+1}^N (s)(s-nh)ds=0.
\end{equation}

Due to (\ref{thm1main1c}), (\ref{thm1main2c}), and (\ref{thm1main3c}), eq. (\ref{s1}) holds, which completes the proof. $\hfill\square$
\end{proof}

\newtheorem{rmk}{Remark}
\begin{rmk}
From Theorem 1, we see that, \textcolor{blue}{if the chosen numerical method achieves errors upper bounded by $O(h)$ in terms of approximating the time-varying Hamiltonian,} then the norm of $e_N (T)$ converge to the norm of initial error $\rho_0 -\sigma_0$ as the number of simulation grids tends to infinity.
\end{rmk}
We proceed to the next result of this article.

\begin{thm}
$\forall T >0$ fixed, if $\exists M_2 >0$, such that \textcolor{blue}{$\Vert E(N,n)\Vert \leq \frac{M_2 T^2}{N^2}$}, $\forall N \in \mathbb{N}^+$ and $0 \leq n \leq N-1$, then the following equation holds:
\begin{multline}\label{s2}
    \lim_{N \to +\infty} \frac{e_N (T)-\mathcal{U}_{\rho_0}[T,0](\rho_0 -\sigma_0)}{\frac{T}{N}}\\
    =\frac{1}{2}\int_0^T \mathcal{U}_{\rho_0}[T,s]\big(-\text{i}[\dot{H}_{\rho_0}(s),\sigma (s)]\big )ds,
\end{multline}
where
\begin{equation}\label{sigmas}
    \sigma (s)=\mathcal{U}_{\rho_0}[s,0](\sigma_0),\quad s \geq 0.
\end{equation}
\end{thm}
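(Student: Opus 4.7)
The plan is to pick up from identity (\ref{thm1main}) established in the proof of Theorem 1, divide both sides by $h=T/N$, and perform a refined asymptotic analysis under the strengthened hypothesis $\Vert E(N,n)\Vert \leq M_2 T^2/N^2$. The right-hand side then consists of three summands, namely $\sum_{n=0}^{N-1} G_N(nh)$, $\tfrac{1}{2}\sum_{n=0}^{N-1} \dot{G}_N(nh^+)h$, and $\tfrac{1}{h}\sum_{n=0}^{N-1}\int_{nh}^{(n+1)h} R^N_{n,n+1}(s)(s-nh)\,ds$. I expect the first and third to vanish in the limit while the second reproduces the integral on the right-hand side of (\ref{s2}).

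For the first summand, the norm bound $\Vert G_N(nh)\Vert \leq 2\Vert E(N,n)\Vert$ already appearing inside (\ref{thm1main1}) combined with the new assumption gives $\Vert G_N(nh)\Vert = O(h^2)$, hence $\Vert \sum_{n=0}^{N-1}G_N(nh)\Vert = O(Nh^2)=O(h)\to 0$. For the second summand I would insert the explicit expression (\ref{1derivative}) for $\dot{G}_N(nh^+)$. The two pieces built from commutators with $E(N,n)$ are $O(h^2)$ in norm, so after weighting by $h$ and summing they contribute $O(h)\to 0$. The piece coming from $-\text{i}[\dot{H}_{\rho_0}(nh),\sigma_N(nh)]$ gives the Riemann sum
\[
\frac{1}{2}\sum_{n=0}^{N-1} h\,\mathcal{U}_{\rho_0}[T,nh]\bigl(-\text{i}[\dot{H}_{\rho_0}(nh),\sigma_N(nh)]\bigr),
\]
which I expect to converge to the right-hand side of (\ref{s2}).

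To justify this limit, the hypothesis of Theorem 2 implies that of Theorem 1 (take $M_1=M_2T$), from which $\sigma_N(t)=\rho(t;\rho_0,0)-e_N(t)\to \mathcal{U}_{\rho_0}[t,0](\sigma_0)=\sigma(t)$ pointwise on $[0,T]$; a direct Gr\"onwall estimate applied to $\sigma_N-\sigma$ in fact yields uniform convergence. The Riemann-sum discretisation error for the sum above is $O(h)$, because the derivative of $s\mapsto \mathcal{U}_{\rho_0}[T,s](-\text{i}[\dot{H}_{\rho_0}(s),\sigma_N(s)])$ is bounded uniformly in $N$ on every grid sub-interval by (\ref{appendixshown}), the $C^2$-differentiability of the feedback protocols, and the uniform bounds on $H_N^n$ and $\sigma_N$. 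Finally, dominated convergence allows replacing $\sigma_N$ by $\sigma$ under the integral, producing the target expression.

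The principal difficulty is the remainder summand: the bound $\Vert R^N_{n,n+1}(s)\Vert\leq R$ used in Theorem 1 yields only $\tfrac{RT}{2}$ after division by $h$, which does not vanish. I would sharpen it by proving that $G_N$ is twice continuously differentiable on every open interval $(nh,(n+1)h)$ with $\Vert \ddot{G}_N(s)\Vert$ bounded uniformly in $N$ and $n$. This follows from one further application of (\ref{appendixshown}), the continuity of $\ddot{H}_{\rho_0}$ (a consequence of $u_k\in C^2$), and the uniform bound on $\dot{\sigma}_N=-\text{i}[H_N^n,\sigma_N]$. A mean-value argument analogous to (\ref{u11})--(\ref{u13}) then upgrades the Peano remainder to $\Vert R^N_{n,n+1}(s)\Vert = O(s-nh)$, so each grid cell contributes $O(h^3)$ to its integral, the aggregated remainder is $O(h^2)$, and after division by $h$ it becomes $O(h)\to 0$. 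Assembling the three estimates completes the proof of (\ref{s2}).
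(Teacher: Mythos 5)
Your proposal is correct and follows essentially the same route as the paper: the paper also isolates $\frac{1}{2}\sum_n \dot{G}_N(nh^+)h$ as the term producing the limit integral (replacing $\sigma_N$ by $\sigma$ via the Theorem~1 estimates, exactly as you do), kills the $E(N,n)$-dependent pieces using $\Vert E(N,n)\Vert=O(h^2)$, and disposes of the remainder by establishing a uniform bound on $\ddot{G}_N$. The only cosmetic difference is that the paper writes an explicit second-order Taylor expansion with a Peano remainder $Q^N_{n,n+1}$, whereas you keep the first-order expansion and sharpen its remainder to $O(s-nh)$ using the same second-derivative bound; the two are equivalent.
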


\begin{proof}
$\forall N \in \mathbb{N}^+$, $0 \leq n \leq N-1$, and $nh \leq s < (n+1)h$, we have the following equation by Taylor's theorem with the Peano form of the remainder (this time to the second order):
\begin{multline}\label{secondorder}
    G_N (s)=G_N(nh)+\dot{G}_N(nh^+)(s-nh)\\
    +\frac{1}{2}\ddot{G}_N(nh^+) (s-nh)^2+Q_{n,n+1}^N (s)(s-nh)^2,
\end{multline}
where 
\begin{equation}
    \lim_{s\to nh}Q_{n,n+1}^N (s)=0.
\end{equation}
From (\ref{induction2}), we see that
\begin{equation*}
 \frac{e_N (T)-\mathcal{U}_{\rho_0}[T,0](\rho_0 -\sigma_0)}{h}=\frac{\int_0^T G_N (s) ds}{h}=\sum_{n=0}^{N-1} \frac{\int_{nh}^{nh+h} G_N (s) ds}{h}. 
\end{equation*}
Based on (\ref{secondorder}), we have 
\begin{equation}\label{secondorder1}
\begin{aligned}
    \sum_{n=0}^{N-1} \frac{\int_{nh}^{nh+h} G_N (s) ds}{h}&=\sum_{n=0}^{N-1} \Big(G_N(nh)+\frac{1}{2}\dot{G}_N(nh^+)h\\
    &+\frac{1}{3!}\ddot{G}_N(nh^+)h^2+\frac{\int_0^h Q_{n,n+1}^N (u+nh)u^2du}{h}\Big).
    \end{aligned}
\end{equation}

The r.h.s. of (\ref{secondorder1}) are analysed separately. Firstly,
\begin{equation}\label{thm2main1}
    \begin{aligned}
        \Vert \sum_{n=0}^{N-1} G_N (nh)\Vert &=\Vert \sum_{n=0}^{N-1}\mathcal{U}_{\rho_0}[t,nh]\big(-\text{i}[E(N,n),\sigma_N (nh)]\big) \Vert\\
        &\leq \sum_{n=0}^{N-1} 2\Vert E(N,n)\Vert \\
        &\leq 2\sum_{n=0}^{N-1} M_2 h^2=2M_2 h T.
    \end{aligned}
\end{equation}
Based on (\ref{thm2main1}), we have
\begin{equation}\label{thm2main1c}
    \lim_{N \to +\infty}\sum_{n=0}^{N-1} G_N (nh)=0.
\end{equation}

Secondly, taking the boundedness of $H_{\rho_0}(s)$ on $[0,+\infty)$ into account, we have
\begin{equation}\label{thm2main21}
    \begin{aligned}
        &\Vert \sum_{n=0}^{N-1} \mathcal{U}_{\rho_0}[T,nh]\big(H_{\rho_0}(nh),[E(N,n),\sigma_N (nh)]\big)h\Vert \\
        &\leq 4\sum_{n=0}^{N-1} \Vert H_{\rho_0}(nh)\Vert \cdot \Vert E(N,n)\Vert h \\
        &\leq 4\sum_{n=0}^{N-1} P_0 M_2 h^3=4 P_0 M_2 h^2 T.
    \end{aligned}
\end{equation}
Therefore, it is true that
\begin{equation}\label{thm2main21c}
    \lim_{N \to +\infty}\sum_{n=0}^{N-1} \mathcal{U}_{\rho_0}[T,nh]\bigg(\big[H_{\rho_0}(nh),[E(N,n),\sigma_N (nh)]\big]\bigg)h=0.
\end{equation}
Also, similar to (\ref{28}), we have
\begin{equation}\label{28another}
         \Vert H_N^n\Vert \leq M_2 h^2 +P_0. 
\end{equation}
With the aid of (\ref{28another}), it is shown that
\begin{equation}\label{thm2main22}
    \begin{aligned}
        &\Vert \sum_{n=0}^{N-1} \mathcal{U}_{\rho_0}[T,nh]\big(E(N,n),[H_N^n,\sigma_N (nh)]\big)h\Vert \\
        &\leq 4\sum_{n=0}^{N-1}\Vert H_N^n\Vert \cdot \Vert E(N,n)\Vert h \\
        &\leq 4 M_2 (M_2 h^4 T + P_0 h^2 T).
    \end{aligned}
\end{equation}
As a result of (\ref{thm2main22}), it holds that
\begin{equation}\label{thm2main22c}
    \lim_{N \to +\infty}\sum_{n=0}^N  \mathcal{U}_{\rho_0}[T,nh]\bigg(\big[E(N,n),[H_N^n,\sigma_N (nh)]\big]\bigg)h=0.
\end{equation}
Next, based on (\ref{thm1main1}), (\ref{thm1main2}), and (\ref{thm1main3}), we arrive at
\begin{equation}\label{thm2aid1}
    \begin{aligned}
        \Vert \sigma (nh)-\sigma_N (nh) \Vert &=\Vert \rho(nh;\rho_0,0) -\sigma_N (nh) -(\rho(nh;\rho_0,0)-\sigma (nh)) \Vert\\
        &=\Vert e_N (nh)-\mathcal{U}_{\rho_0}[nh,0](\rho_0-\sigma_0) \Vert\\
        &\leq (2M_1+2P_1+\frac{1}{2}R)nh^2+8P_0 M_1 nh^3+4M_1^2 n h^4,
    \end{aligned}
\end{equation}
for $0 \leq n \leq N-1$. It thus follows that there exist three positive real numbers $B_1$, $B_2$, and $B_3$, such that
\begin{equation}\label{thm2aid2}
    \begin{aligned}
        \Vert \sum_{n=0}^{N-1}-\text{i}[\dot{H}_{\rho_0}(nh),\sigma_N (nh)-\sigma(nh)]h\Vert&\leq \sum_{n=0}^{N-1} B_1 nh^3 + B_2 n h^4 +B_3 nh^5\\
        &=B_1\frac{N-1}{2N^2}+B_2 h\frac{N-1}{2N^2}+B_3 h^2\frac{N-1}{2N^2}.
    \end{aligned}
\end{equation}
It thus follows that 
\begin{equation}\label{thm2aid3}
    \lim_{N\to+\infty}\sum_{n=0}^{N-1}\mathcal{U}_{\rho_0}\big(-\text{i}[\dot{H}_{\rho_0}(nh),\sigma_N (nh)-\sigma(nh)]\big)h=0.
\end{equation}
Since 
\begin{multline}
    \mathcal{U}_{\rho_0}[T,nh]\big(-\text{i}[\dot{H}_{\rho_0}(nh),\sigma_N(nh)]\big)\\=\mathcal{U}_{\rho_0}[T,nh]\big(-\text{i}[\dot{H}_{\rho_0}(nh),\sigma(nh)]-\text{i}[\dot{H}_{\rho_0}(nh),\sigma_N (nh)-\sigma(nh)]\big),
\end{multline}
from (\ref{thm2aid3}), it is true that
\begin{equation}\label{thm2main23c}
    \begin{aligned}
    &\lim_{N\to +\infty}\sum_{n=0}^{N-1}\mathcal{U}_{\rho_0}[T,nh]\big(-\text{i}[\dot{H}_{\rho_0}(nh),\sigma_N(nh)]\big)h\\
    &=\int_0^T \mathcal{U}_{\rho_0}[T,s]\big(-\text{i}[\dot{H}_{\rho_0}(s),\sigma(s)]\big)ds.
    \end{aligned}
\end{equation}

Thirdly, for $0 \leq n \leq N-1$, $s \in (nh, nh+h)$, we have
\begin{equation}\label{1thderivative}
\begin{aligned}
    \dot{G}_N (s)&=\mathcal{U}_{\rho_0}[T,s]\Big( \big[H_{\rho_0}(s),[H_{\rho_0}(s)-H_N^n,\sigma_N (s)]\big]\\
    &-\text{i}[\dot{H}_{\rho_0}(s),\sigma_N (s)]-\big[H_{\rho_0}(s)-H_N^n,[H_N^n,\sigma_N (s)]\big]\Big)\\
    &\triangleq \mathcal{U}_{\rho_0}[T,s]\big(B_N (s)\big).
    \end{aligned}
\end{equation}
By (\ref{appendixshown}), it is clear that
\begin{equation}
    \ddot{G}_N (s)=\mathcal{U}_{\rho_0}[T,s]\big(\text{i}[H_{\rho_0}(s),B_N (s)]+\dot{B}_N (s)\big).
\end{equation}
we evaluate that
\begin{equation}
    \Vert B_N (s) \Vert \leq 4(2P_0+M_2 T^2)^2 +2P_1,
\end{equation}
for $0 \leq n \leq N-1$, $s \in (nh, nh+h)$. Therefore, it is clear that $\exists R_1 >0$, such that
\begin{equation}\label{58}
    \Vert \mathcal{U}_{\rho_0}[T,s]\big(\text{i}[H_{\rho_0}(s),B_N (s)] \Vert \leq R_1,
\end{equation}
$\forall N \in \mathbb{N}^+$, $0 \leq n \leq N-1$, and $s \in (nh, nh+h)$. By applying (\ref{28another}), we have  
\begin{equation}\label{2derivative0nh}
    \Vert B_N (nh^+) \Vert \leq 8P_0 M_2 T^2+4(M_2 T^2)^2 +2P_1.
\end{equation}
We then proceed to evaluate $\dot{B}_N (s)$. It is computed that, for $0 \leq n \leq N-1$, $s \in (nh, nh+h)$,
\begin{multline}\label{2derivative1th}
    \frac{d}{ds}\Big(\big[H_{\rho_0}(s),[H_{\rho_0}(s)-H_N^n,\sigma_N (s)]\big]\Big)\\
    =\dot{H}_{\rho_0}(s)\big(H_{\rho_0}(s)-H_N^n\big)\sigma_N (s)+H_{\rho_0}(s)\dot{H}_{\rho_0}(s)\sigma_N (s)\\
    +H_{\rho_0}(s)\big(H_{\rho_0}(s)-H_N^n\big)\dot{\sigma}_N (s)\\
    -\dot{H}_{\rho_0}(s)\sigma_N (s)\big(H_{\rho_0}(s)-H_N^n\big)-H_{\rho_0}(s)\dot{\sigma}_N (s)\big(H_{\rho_0}(s)-H_N^n\big)\\
    -H_{\rho_0}(s)\sigma_N (s)\dot{H}_{\rho_0}(s)\\
    -\dot{H}_{\rho_0}(s)\sigma_N (s)H_{\rho_0}(s)-\big(H_{\rho_0}(s)-H_N^n\big)\dot{\sigma}_N (s)H_{\rho_0}(s)\\
    -\big(H_{\rho_0}(s)-H_N^n\big)\sigma_N (s)\dot{H}_{\rho_0}(s)\\
    +\dot{\sigma}_N (s)\big(H_{\rho_0}(s)-H_N^n\big)H_{\rho_0}(s)+\sigma_N (s)\dot{H}_{\rho_0}(s)H_{\rho_0}(s)\\
    +\sigma_N (s)\big(H_{\rho_0}(s)-H_N^n\big)\dot{H}_{\rho_0}(s).
\end{multline}
Consequently, we find that
\begin{multline}\label{2derivative1nh}
    \frac{d}{ds}\Big(\big[H_{\rho_0}(s),[H_{\rho_0}(s)-H_N^n,\sigma_N (s)]\big]\Big)\Big{|}_{nh^+}\\
    =\Big[H_{\rho_0}(nh^+),\big[E(N,n),-\text{i}[H_N^n,\sigma_N (nh)]\big]\Big]\\
    +\big[H_{\rho_0}(nh^+),[\dot{H}_{\rho_0}(nh^+),\sigma_N (nh)]\big]\\
    +\big[H_{\rho_0}(nh^+),[\dot{H}_{\rho_0}(nh^+),[E(N,n),\sigma_N (nh)]\big].
\end{multline}
Next, it is evaluated that, for $0 \leq n \leq N-1$, $s \in (nh, nh+h)$,
\begin{equation}\label{2derivative2th}
    \frac{d}{ds}\big(-\text{i}[\dot{H}_{\rho_0}(s),\sigma_N (s)]\big)=-\text{i}\big([\ddot{H}_{\rho_0}(s),\sigma_N (s)]+[\dot{H}_{\rho_0}(s),\dot{\sigma}_N (s)]\big).
\end{equation}
It follows that
\begin{multline}\label{2derivative2nh}
    \frac{d}{ds}\big(-\text{i}[\dot{H}_{\rho_0}(s),\sigma_N (s)]\big)\Big{|}_{nh^+}\\
    =-\text{i}\big([\ddot{H}_{\rho_0}(nh^+),\sigma_N (nh^+)]+[\dot{H}_{\rho_0}(nh^+),-\text{i}[H_N^n,\sigma_N (nh)]\big).
\end{multline}
Moreover, for $0 \leq n \leq N-1$, $s \in (nh, nh+h)$, 
\begin{multline}\label{2derivative3th}
    \frac{d}{ds}\Big(\big[[H_N^n,\sigma_N (s)],H_{\rho_0}(s)-H_N^n\big]\Big)\\
    =H_N^n\dot{\sigma}_N (s)(H_{\rho_0}(s)-H_N^n)+H_N^n\sigma_N (s)\dot{H}_{\rho_0}(s)\\
    -\dot{\sigma}_N (s)H_N^n(H_{\rho_0}(s)-H_N^n)-\sigma_N (s)H_N^n\dot{H}_{\rho_0}(s)\\
    -\dot{H}_{\rho_0}(s)H_N^n\sigma_N (s)-(H_{\rho_0}(s)-H_N^n)H_N^n\dot{\sigma}_N (s)\\
    +\dot{H}_{\rho_0}(s)\sigma_N (s)H_N^n+(H_{\rho_0}(s)-H_N^n)\dot{\sigma}_N (s)H_N^n.
\end{multline}
We also have
\begin{multline}\label{2derivative3nh}
    \frac{d}{ds}\Big(\big[[H_N^n,\sigma_N (s)],H_{\rho_0}(s)-H_N^n\big]\Big)\Big{|}_{nh^+}\\
    =\bigg[\big[H_N^n,-\text{i}[H_N^n,\sigma_N (s)\big],E(N,n)\bigg]+\big[[H_N^n,\sigma_N (nh)],\dot{H}_{\rho_0}(nh^+)\big].
\end{multline}
Based on (\ref{58})-(\ref{2derivative3nh}), we see that there exists $R_2 >0$, such that $\Vert \ddot{G}_N (s)\Vert \leq R_2$, $\forall N \in \mathbb{N}^+$, $0 \leq n \leq N-1$ and $nh<s<\big((n+1)h\big)$. As result, it is true that
\begin{equation}\label{thm2main3}
    \Vert \sum_{n=0}^{N-1}\ddot{G}_N (nh^+) \Vert h^2 \leq R_2\sum_{n=0}^{N-1}h^2 =R_2 h T,
\end{equation}
and that
\begin{equation}\label{thm2main3c}
     \lim_{N\to +\infty}\sum_{n=0}^{N-1}\ddot{G}_N (nh^+)  h^2 =0.
\end{equation}

Fourthly, from (\ref{secondorder}), it is derived that
\begin{equation}\label{remainderbounded}
\begin{aligned}
    Q_{n,n+1}^N (s)&=\frac{G_N (s)-G_N (nh)-\dot{G}_N (nh^+)(s-nh)}{(s-nh)^2}+\frac{1}{2}\ddot{G}_N(nh^+)\\
    &=\frac{\frac{G_N (s)-G_N (nh)}{s-nh}-\dot{G}_N (nh^+)}{(s-nh)}+\frac{1}{2}\ddot{G}_N(nh^+),
    \end{aligned}
\end{equation}
where $0 \leq n \leq N-1$ and $s \in (nh, nh+h)$. Let us denote
\begin{equation}
    \hat{G}_N(s)\triangleq \frac{G_N (s)-G_N (nh)}{s-nh},
\end{equation}
where $0 \leq n \leq N-1$ and $s \in (nh, nh+h)$. Then, for $1\leq i,j \leq N_S$, eq. (\ref{u11}) translates to
\begin{equation*}
\Big(\hat{G}_N(s)\Big)_{ij}=\dot{(G_N)_{ij}}|_{s_{ij}^{N,n}}.
\end{equation*}
Moreover, there exists $\theta_{ij}^{N,n}\in (nh,s_{ij}^{N,n}]$, such that 
\begin{equation}\label{u21}
    \bigg(\frac{\dot{(G_N)_{ij}}|_{s_{ij}^{N,n}}-\dot{(G_N)_{ij}}|_{nh^+}}{s_{ij}^{N,n}-nh}\bigg)=\ddot{\big(G_N\big)}_{ij}\big{|}_{\theta_{ij}^{N,n}} \bigg(\frac{s_{ij}^{N,n}-nh}{s-nh}\bigg)
\end{equation}
Provided with (\ref{u21}) and the boundedness of $\ddot{G}_N$ already mentioned, it is true that there exists $W>0$, such that
\begin{equation}
    \Vert Q_{n,n+1}^N (s)\Vert \leq W,
\end{equation}
$\forall N \in \mathbb{N}^+$, $0 \leq n \leq N-1$ and $nh<s<\big((n+1)h\big)$. It thus follows that
\begin{equation}\label{thm2main4}
    \frac{\Vert\sum_{n=0}^{N-1}\int_0^h Q_{n,n+1}^N (u+nh)u^2du\Vert}{h}\leq \frac{\sum_{n=0}^{N-1}W\int_0^h u^2du}{h}=\frac{1}{3}WhT.
\end{equation}
Consequently, 
\begin{equation}\label{thm2main24c}
    \lim_{N\to +\infty}\frac{\sum_{n=0}^{N-1}\int_0^h Q_{n,n+1}^N (u+nh)u^2du}{h}=0.
\end{equation}

Finally, combining (\ref{secondorder1}), (\ref{thm2main1c}), (\ref{thm2main21c}), (\ref{thm2main22c}), (\ref{thm2main23c}), (\ref{thm2main3c}) and (\ref{thm2main24c}), eq. (\ref{s2}) follows. The proof is completed. $\hfill\square$
\end{proof}

\begin{rmk}
From Theorem 2, we know that, \textcolor{blue}{provided that the r.h.s of (\ref{s2}) is not zero and the numerical algorithm achieves errors upper bounded by $O(h^2)$ in terms of approximating the time-varying Hamiltonian}, the convergence of $e_N(T)$ to $\mathcal{U}_{\rho_0}[T,0](\rho_0-\sigma_0)$ as $N\to +\infty$ is dominated by the first power of $h$. Moreover, it should be noted that the r.h.s of (\ref{s2}) is \textbf{independent of} the numerical approximation method, which implies that adopting more accurate (thus possibly more expensive) numerical methods may \textbf{not} lead to more efficient error convergence, once the accuracy reaches above a certain level. This is a key finding of our work, which may provide guidance on saving computational resources.
\end{rmk}

\subsection{Error Upper Bound}
In this subsection, we present an upper bound estimation of the error. The following notations are in order. 

Let $\{e_j\}_{j=1}^{N_S}$ be a set of $N_S$-dimensional real column vectors, with $e_j(k)=\delta_{jk}$, $1 \leq j,k \leq N_S$. Moreover, let us define $E_{jk}\triangleq e_j e_k^T+e_k e_j^T$, $F_{jk}\triangleq \text{i}e_j e_k^T-\text{i}e_k e_j^T$, $1\leq j<k \leq N_S$, and $\Omega_{kk}\triangleq e_k e_k^T$, $1\leq k \leq N_S$. Then, $\forall \rho \in \mathcal{D}(\mathcal{H})$, there exists $N_S^2-1$ real numbers, which are denoted as:
\begin{equation}
    \{\rho_{ii}|1\leq i \leq N_S -1\}\cup\{\rho_{ij}^R|1\leq i<j\leq N_S\}\cup\{\rho_{ij}^I|1\leq i<j\leq N_S\},
\end{equation}
such that 
\begin{equation}
    \rho=\sum_{i=1}^{N_S-1}\rho_{ii}\Omega_{ii}+\sum_{1\leq i<j\leq N_S}(\rho_{ij}^R E_{ij}+\rho_{ij}^I F_{ij}).
\end{equation}

For each $k \in \{1,...,M\}$, the feedback protocol $u_k (\cdot)$ is viewed as a real function $\hat{u}_k (\cdot)$ of $N_S -1$ real variables:
\begin{equation}\label{notation3_1}
    \hat{u}_k\big(\{\rho_{ii}\}_{i=1}^{N_S-1};\{\rho_{ij}^R\}_{1\leq i<j\leq N_S};\{\rho_{ij}^I\}_{1\leq i<j\leq N_S}\big)\\
    \triangleq u_k(\rho), \quad \rho \in \mathcal{D}(\mathcal{H}).
\end{equation}
Next, we denote:
\begin{equation}\label{notation3_2}
    \hat{u}_{ii}^k\triangleq \frac{\partial \hat{u}_k }{\partial \rho_{ii}}, \quad 1\leq i \leq N_S -1,
\end{equation}
and
\begin{equation}\label{notation3_3}
    \hat{u}_{ij}^{k,R}\triangleq \frac{\partial \hat{u}_k }{\partial \rho_{ij}^R}, \hat{u}_{ij}^{k,I}\triangleq \frac{\partial \hat{u}_k }{\partial \rho_{ij}^I}, 1\leq i <j \leq N_S,
\end{equation}
where $1\leq k\leq M$.

Our result on error upper bound is given as follows.

\begin{thm}
$\forall T>0$, the following inequality holds:
\begin{multline}\label{th3}
    \Vert e_N (T)\Vert \leq \Vert \rho_0 -\sigma_0 \Vert + 2\sum_{n=0}^{N-1}\Vert E(N,n) \Vert \frac{T}{N}\\
    +2 \sum_{k=1}^M L_{u_k}^1\big( \Vert H_0\Vert +\sum_{j=1}^M L_{u_j}^0\Vert H_j \Vert \big)\Vert H_k \Vert \frac{T^2}{N},
\end{multline}
where $L_{u_j}^0 \triangleq \max_{\rho \in \{\rho(t;\rho_0,0)\}_{t=0}^{+\infty}} \vert u_j (\rho)\vert$, and 
\begin{equation}
    L_{u_j}^1 \triangleq \\
    \max_{\rho \in \mathcal{D}(\mathcal{H})}\sqrt{\sum_{i=1}^{N_S-1}(\hat{u}_{ii})^2 + \sum_{1\leq i<k \leq N_S}\frac{(\hat{u}_{ik}^{j,R})^2+(\hat{u}_{ik}^{j,I})^2}{2}}, 
\end{equation}
for $1\leq j \leq M$.
\end{thm}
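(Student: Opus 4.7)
The plan is to start from the integral representation (\ref{induction2}) derived in the proof of Theorem~1, namely
\begin{equation*}
e_N(T) = \mathcal{U}_{\rho_0}[T,0](\rho_0-\sigma_0) + \int_0^T \mathcal{U}_{\rho_0}[T,s]\bigl(-\text{i}[H_{\rho_0}(s)-H_N(s),\sigma_N(s)]\bigr)ds,
\end{equation*}
and take Frobenius norms on both sides. The unitarity of $\mathcal{U}_{\rho_0}[T,\cdot]$ together with the commutator bound $\Vert[A,\sigma_N]\Vert \leq 2\Vert A\Vert$ already exploited in (\ref{thm1main1}) reduces the task to controlling $\int_0^T \Vert H_{\rho_0}(s)-H_N(s)\Vert\,ds$.

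On each subinterval $s \in [nh,(n+1)h)$ with $h=T/N$, since $\tilde{u}_k$ is piecewise constant there, I would decompose
\begin{equation*}
H_{\rho_0}(s) - H_N(s) = E(N,n) + \sum_{k=1}^M \bigl(g_k^{\rho_0}(s) - g_k^{\rho_0}(nh)\bigr) H_k.
\end{equation*}
The $E(N,n)$ piece contributes $\Vert E(N,n)\Vert \cdot h$ to the integral on this interval; summed over $n$ and doubled via the commutator bound it yields the middle term of (\ref{th3}). All remaining work concerns the increments $g_k^{\rho_0}(s) - g_k^{\rho_0}(nh)$.

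The core estimate is a chain-rule Lipschitz bound $|g_k^{\rho_0}(s) - g_k^{\rho_0}(nh)| \leq L_{u_k}^1 \int_{nh}^s \Vert \dot{\rho}(t)\Vert\,dt$. Using the real parametrization introduced in (\ref{notation3_1})--(\ref{notation3_3}), the chain rule gives
\begin{equation*}
\frac{d}{dt}u_k\bigl(\rho(t;\rho_0,0)\bigr) = \sum_{i=1}^{N_S-1}\hat{u}_{ii}^k \dot{\rho}_{ii} + \sum_{i<j}\bigl(\hat{u}_{ij}^{k,R}\dot{\rho}_{ij}^R + \hat{u}_{ij}^{k,I}\dot{\rho}_{ij}^I\bigr).
\end{equation*}
Applying a weighted Cauchy--Schwarz that rewrites $\hat{u}_{ij}^{k,R}\dot{\rho}_{ij}^R = (\hat{u}_{ij}^{k,R}/\sqrt{2})\cdot(\sqrt{2}\,\dot{\rho}_{ij}^R)$, and likewise for the imaginary off-diagonal components, should produce exactly the constant $L_{u_k}^1$ stated in the theorem, paired with the quantity $\sum_i \dot{\rho}_{ii}^2 + 2\sum_{i<j}\bigl((\dot{\rho}_{ij}^R)^2+(\dot{\rho}_{ij}^I)^2\bigr)$. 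I expect matching this latter quantity with $\Vert \dot{\rho}(t)\Vert^2$ to be the most delicate step: it requires the orthonormality of $\{\Omega_{ii}\}\cup\{E_{ij}/\sqrt{2}\}\cup\{F_{ij}/\sqrt{2}\}$ under the Frobenius inner product, together with the trace-preservation identity $\dot{\rho}_{N_S N_S}=-\sum_{i<N_S}\dot{\rho}_{ii}$, whose contribution is nonnegative and therefore discardable in passing to the upper bound.

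Finally, I would bound $\Vert \dot{\rho}(t)\Vert = \Vert[H_{\rho_0}(t),\rho(t;\rho_0,0)]\Vert \leq 2\Vert H_{\rho_0}(t)\Vert \leq 2\bigl(\Vert H_0\Vert+\sum_{j=1}^M L_{u_j}^0\Vert H_j\Vert\bigr)$, where the last inequality uses the definition of $L_{u_j}^0$ along the reference trajectory. Multiplying by $\Vert H_k\Vert$, integrating in $s$ (which yields a factor of $h^2/2$ per interval), summing over the $N$ intervals, and doubling via the commutator bound then delivers the third term of (\ref{th3}). Combining with the initial-error contribution $\Vert \rho_0-\sigma_0\Vert$ and the $E(N,n)$ contribution assembles the claimed inequality.
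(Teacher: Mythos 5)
Your proposal is correct and follows essentially the same route as the paper: the same decomposition of $H_{\rho_0}(s)-H_N(s)$ into $E(N,n)$ plus the increment $\sum_k\big(g_k^{\rho_0}(s)-g_k^{\rho_0}(nh)\big)H_k$, the same commutator and unitarity bounds, and the same weighted Cauchy--Schwarz producing $L_{u_k}^1$ paired with $\Vert\dot{\rho}\Vert\leq 2\Vert H_{\rho_0}(s)\Vert$. The only cosmetic difference is that you apply the Cauchy--Schwarz inequality directly to the real parametrization (correctly noting that the discarded $\dot{\rho}_{N_SN_S}$ contribution is nonnegative), whereas the paper packages the same inequality through the auxiliary matrix $D_{u_k}$ and the trace identity of Appendix B, and you integrate $\Vert\dot\rho\Vert$ where the paper invokes the mean value theorem.
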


\begin{proof}
Based on (\ref{induction1}), we see that
\begin{equation}\label{thm3_1}
\begin{aligned}
    &\Vert e_N ((n+1)h) \Vert \\ 
    &\leq  \Vert(e_N(nh)) \Vert+\int_{nh}^{(n+1)h}  \Vert(-\text{i}[H_{\rho_0}(s)-H_N^n,\sigma_N(s)] \Vert ds\\
    &\leq \Vert(e_N(nh)) \Vert +2 \int_{nh}^{(n+1)h}\Vert H_{\rho_0}(s)-H_{\rho_0}(nh) \Vert + \Vert H_{\rho_0}(s)-H_{\rho_0}(nh) \Vert ds\\
    &=\Vert(e_N(nh)) \Vert +2 \int_{nh}^{(n+1)h}\Vert H_{\rho_0}(s)-H_{\rho_0}(nh) \Vert + \Vert E(N,n) \Vert ds,
    \end{aligned}
\end{equation}
for $1 \leq n \leq N-1$.

On one hand, it is clear that, for $1 \leq n \leq N-1$,
\begin{equation}\label{thm3_2}
    \int_{nh}^{(n+1)h}\Vert E(N,n) \Vert ds=\Vert E(N,n) \Vert h.
\end{equation}
On the other hand, for $1 \leq n \leq N-1$ and $s\in (nh,nh+h)$,
\begin{multline}\label{thm3_3}
     H_{\rho_0}(s)-H_{\rho_0}(nh)=\sum_{k=1}^M \Big(u_k\big(\rho(s;\rho_0,0)\big)-u_k\big(\rho(nh;\rho_0,0)\big)\Big)H_k\\
     =\sum_{k=1}^M \Big(\frac{d u_k\big(\rho(s;\rho_0,0)\big)}{ds} \Big{|}_{s_N^{n,k}}\Big)(s-nh)H_k,
\end{multline}
where $nh <s_N^{n,k} \leq s$, for $1 \leq k \leq M$.

Let us refer to (\ref{notation3_1}), (\ref{notation3_2}), and (\ref{notation3_3}). Then, it is derived that, for $s>0$ and $1 \leq k \leq M$,
\begin{multline}\label{thm3_4}
    \frac{d u_k\big(\rho(s;\rho_0,0)\big)}{ds}=\sum_{i=1}^{N_S-1}\hat{u}_{ii}^k \frac{d \rho_{ii}}{ds}+\sum_{1\leq i <j \leq N_S}\Bigg( \hat{u}_{ij}^{k,R} \frac{d \rho_{ij}^{k,R}}{ds}+ \hat{u}_{ij}^{k,I} \frac{d \rho_{ij}^{k,I}}{ds}\Bigg).
\end{multline}
Next, for each $k \in \{1,...,M\}$, we define an $N_S \times N_S$ matrix $D_{u_k}$, where
\begin{equation}\label{thm3_5}
    \begin{aligned}
    &(D_{u_k})_{ii}\triangleq \hat{u}_{ii},\quad 1\leq i \leq N_S-1;\\
    &(D_{u_k})_{N_S N_S}\triangleq 0;\\
    &(D_{u_k})_{jk}\triangleq \frac{\hat{u}_{jk}^R +\text{i}\hat{u}_{jk}^I}{2},\quad 1\leq j<k\leq N_S;\\
    &(D_{u_k})_{jk}\triangleq \frac{\hat{u}_{kj}^R -\text{i}\hat{u}_{kj}^I}{2},\quad 1\leq k<j\leq N_S.
    \end{aligned}
\end{equation}
In Appendix B, it is proved that
\begin{equation}\label{thm3_6}
   \frac{d u_k\big(\rho(s;\rho_0,0)\big)}{ds}=\text{tr}\big(D_{u_k}(-\text{i})[H_{\rho_0}(s),\rho(s;\rho_0,0)]\big),
\end{equation}
for $1 \leq k \leq M$ and $s>0$, with the aid of (\ref{thm3_4}). Then, we apply the Cauchy-Schwarz inequality and yield that 
\begin{equation}\label{thm3_7}
    \begin{aligned}
    \Bigg{|}\frac{d u_k\big(\rho(s;\rho_0,0)\big)}{ds}\Bigg{|}&\leq \Vert D_{u_k} \Vert \cdot \Vert (-\text{i})[H_{\rho_0}(s),\rho(s;\rho_0,0)] \Vert\\
                                                              &\leq 2 \Vert D_{u_k} \Vert \cdot \Vert H_{\rho_0}(s) \Vert,
    \end{aligned}
\end{equation}
for $s>0$. 

Then, $\Vert D_{u_k} \Vert$ and $\Vert H_{\rho_0}(s) \Vert$ are analysed separately. Firstly, we have
\begin{equation}\label{thm3_8}
    \begin{aligned}
    \Vert D_{u_k} \Vert &=\big(\text{tr}(D_{u_k}^\dagger D_{u_k})\big)^{\frac{1}{2}}=\big(\sum_{i=1}^{N_S}\sum_{j=1}^{N_S}\vert D_{u_k} \vert^2 \big)^{\frac{1}{2}}\\
                        &=\Bigg(\sum_{i=1}^{N_S-1}(\hat{u}_{ii})^2 + \sum_{1\leq i<k \leq N_S}\frac{(\hat{u}_{ik}^{j,R})^2+(\hat{u}_{ik}^{j,I})^2}{2}\Bigg)^{\frac{1}{2}}=L_{u_k}^1
    \end{aligned}
\end{equation}
Secondly, $\forall s>0$, it holds that 
\begin{equation}\label{thm3_9}
    \begin{aligned}
    \Vert H_{\rho_0}(s) \Vert &=\Vert H_0  + \sum_{k=1}^M u_k\big(\rho(s;\rho_0,0)\big) H_k \Vert\\
                              &\leq \Vert H_0 \Vert + \sum_{k=1}^M \vert u_k\big(\rho(s;\rho_0,0)\big)\vert \cdot \Vert H_k \Vert \\
                              &\leq \Vert H_0 \Vert + \sum_{k=1}^M L_{u_k}^0\Vert H_k \Vert.
    \end{aligned}
\end{equation}
Combining (\ref{thm3_1}), (\ref{thm3_2}), (\ref{thm3_3}), (\ref{thm3_7}), (\ref{thm3_8}) and (\ref{thm3_9}), we arrive at
\begin{multline}\label{thm3_10}
    \Vert e_N ((n+1)h) \Vert \leq  \Vert(e_N(nh)) \Vert + 2\Vert E(N,n) \Vert h\\
    +4\int_{nh}^{nh+h} \sum_{k=1}^M L_{u_k}^1 (\Vert H_0 \Vert + \sum_{k=1}^M L_{u_j}^0\Vert H_j \Vert)\Vert H_k \Vert(s-nh) ds\\
    = \Vert(e_N(nh)) \Vert + 2\Vert E(N,n) \Vert h \\
    +2\sum_{k=1}^M L_{u_k}^1 (\Vert H_0 \Vert + \sum_{k=1}^M L_{u_j}^0\Vert H_j \Vert)\Vert H_k \Vert h^2,
\end{multline}
for $0 \leq n \leq N-1$. 

Finally, eq. (\ref{th3}) follows from the summation of (\ref{thm3_10}) from $n=0$ to $n=N-1$. $\hfill\square$
\end{proof}

\begin{rmk}
The error upper bound specified in Theorem 3 involves the contribution of three factors. The first one is the norm of the initial error, the second one is the error in approximating the time-varying Hamiltonian, and the third one is a term proportional to $\frac{T^2}{N}$. It is noted that the coefficient before $\frac{T^2}{N}$ is \textbf{independent} of the numerical simulation method we choose. Therefore, it is not possible to diminish the influence of the third factor on the upper bound by adopting more accurate simulation methods.
\end{rmk}

\section{Example}
In this section, we give an example to illustrate our results. Let us set $N_S=4$, then $\mathcal{H}$ corresponds to the underlying Hilbert space of two qubits. We denote
\begin{equation}
    |1\rangle \triangleq \begin{pmatrix}1\\0 \end{pmatrix}, \quad |0\rangle \triangleq \begin{pmatrix}0\\1 \end{pmatrix}.
\end{equation}
It follows that $\mathcal{H}$ is spanned by the following orthonormal basis:
\begin{equation}
    \{|11\rangle,|10\rangle,|01\rangle,|00\rangle\}.
\end{equation}
Next, let us denote 
\begin{equation}
    X \triangleq \begin{pmatrix}0&1\\1&0 \end{pmatrix}, Y \triangleq \begin{pmatrix}0&-\text{i}\\ \text{i}&0 \end{pmatrix}, Z \triangleq \begin{pmatrix}1&0\\0&-1 \end{pmatrix}.
\end{equation}

Suppose that our goal is to prepare the following target state:
\begin{equation}
    \rho_d \triangleq |\Phi\rangle\langle \Phi|,\quad |\Phi\rangle \triangleq \frac{1}{\sqrt{2}}(|10\rangle+|01\rangle),
\end{equation}
which is an entangled state. In the closed-loop design stage, the following dynamics is proposed to achieve our goal:
\begin{equation}\label{eg1}
    \dot{\rho}=-\text{i}[H_0+u_1(\rho)H_1,\rho],\quad \rho(0)=\rho_0,
\end{equation}
where
\begin{equation}\label{eg2}
    \begin{aligned}
    &H_0=Z\otimes I_2+I_2\otimes Z;\\
    &H_1=X\otimes Y-Y\otimes X;\\
    &u_1(\cdot)=-K \text{tr}\big(\text{i}[\rho_d,H_1](\cdot)\big),\quad K>0.
    \end{aligned}
\end{equation}
The following proposition is presented and proved.
\newtheorem{prop}{Proposition}
\begin{prop}
Consider system (\ref{eg1}) with feedback protocol specified in (\ref{eg2}). Let $\rho_0=|10\rangle\langle 10|$, then it holds that
\begin{equation}
    \lim_{t\to +\infty} \rho(t;\rho_0,0)=\rho_d.
\end{equation}
\end{prop}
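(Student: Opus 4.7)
The plan is to combine a standard Lyapunov/fidelity argument with a reduction of the dynamics to a one-dimensional flow on the Bloch sphere. First, I would observe that the two-dimensional subspace $\mathcal{S} \triangleq \text{span}\{|10\rangle,|01\rangle\}$ is invariant under any Hamiltonian of the form $H_0+vH_1$. A direct computation gives $H_0|10\rangle = H_0|01\rangle = 0$ and $H_1|10\rangle = -2\text{i}|01\rangle$, $H_1|01\rangle = 2\text{i}|10\rangle$. Consequently, since $\rho_0 = |10\rangle\langle 10|$ is a pure state supported on $\mathcal{S}$ and the evolution (\ref{eg1}) is unitary, the trajectory $\rho(t;\rho_0,0)$ stays pure and supported on $\mathcal{S}$ for all $t\geq 0$. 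In particular the dynamics effectively takes place on the Bloch sphere of $\mathcal{S}$.

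Second, I would take as Lyapunov function
\begin{equation*}
V(\rho) \triangleq 1 - \text{tr}(\rho_d\rho),
\end{equation*}
which is nonnegative and vanishes iff $\rho=\rho_d$ since $\rho_d$ is a rank-one projection. Because $|\Phi\rangle\in\ker H_0$, one has $H_0\rho_d = \rho_d H_0 = 0$, so the drift term contributes zero to $\dot V$. Writing $f(\rho) \triangleq \text{tr}(\text{i}[\rho_d,H_1]\rho)$ (which is real because $\text{i}[\rho_d,H_1]$ is Hermitian), trace cyclicity yields $\dot V = u_1(\rho)\,f(\rho)$. Substituting the feedback $u_1(\rho)=-Kf(\rho)$ gives $\dot V = -Kf(\rho)^2 \leq 0$, so $V$ is nonincreasing along trajectories.

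Third, I would make the reduction to one dimension explicit. In the basis $\{|10\rangle,|01\rangle\}$ of $\mathcal{S}$, direct calculation gives $H_1|_{\mathcal{S}} = -2\sigma_y$ and $\text{i}[\rho_d,H_1]|_{\mathcal{S}} = 2\sigma_z$, so $f(\rho) = 2 r_z$ where $(r_x,r_y,r_z)$ is the Bloch vector of $\rho|_{\mathcal{S}}$. A straightforward computation of $-\text{i}[u_1H_1,\rho]$ in Bloch coordinates yields
\begin{equation*}
\dot r_x = 8Kr_z^{2},\qquad \dot r_y = 0,\qquad \dot r_z = -8Kr_xr_z,
\end{equation*}
with $r_y(0)=0$, so $r_y\equiv 0$, and $r_x^2+r_z^2$ is conserved (consistent with purity preservation). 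With the initial condition $\rho_0=|10\rangle\langle 10|$, one has $r_x(0)=0$, $r_z(0)=1$, so the motion lies on the circle $r_x^{2}+r_z^{2}=1$. Parametrizing $r_x=\sin\theta$, $r_z=\cos\theta$ collapses the system to the scalar ODE $\dot\theta = 8K\cos\theta$ with $\theta(0)=0$.

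Finally, the scalar ODE $\dot\theta = 8K\cos\theta$ has the asymptotically stable equilibrium $\theta=\pi/2$ and an unstable one at $\theta=-\pi/2$; starting from $\theta(0)=0$, the solution is strictly monotone on $[0,\pi/2)$ and converges to $\pi/2$, which corresponds to $r_x=1$, $r_y=r_z=0$, i.e., to $\rho_d$. The main obstacle I anticipate is the bookkeeping of signs and conventions in the commutator computations together with a clean verification of the invariant-subspace and purity-preservation claims; once the scalar ODE $\dot\theta=8K\cos\theta$ is in hand, the final convergence step is elementary, and using LaSalle as a backup (largest invariant set in $\{f(\rho)=0\}\cap\mathcal{S}$ restricted to pure states starting from $\theta_0=0$) provides an alternative finishing argument.
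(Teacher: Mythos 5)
Your proof is correct, but it takes a genuinely different route from the paper's. The paper performs the same Lyapunov computation (with $V(\rho)=\operatorname{tr}(\Pi_d\rho)$, $\Pi_d=I_4-\rho_d$, giving $\dot V=-K f(\rho)^2$ for $f(\rho)=\operatorname{tr}(\mathrm{i}[\rho_d,H_1]\rho)$) and then invokes LaSalle's invariance principle, pinning down the limit set through a list of constraints: the populations on $|11\rangle$ and $|00\rangle$ remain zero, the coherence $\langle 10|\rho|01\rangle$ has nonnegative derivative and hence stays nonnegative, purity is conserved, and the only pure state compatible with all of these and with $f(\rho)=0$ is $\rho_d$. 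You instead exploit the invariance of $\mathcal{S}=\operatorname{span}\{|10\rangle,|01\rangle\}$ to collapse the entire flow onto the Bloch circle and obtain the scalar ODE $\dot\theta=8K\cos\theta$, whose phase-line analysis finishes the job. Your arithmetic checks out: $H_0$ annihilates $\mathcal{S}$, $H_1$ annihilates $|11\rangle$ and $|00\rangle$ (so $\mathcal{S}$ really is invariant), $H_1|_{\mathcal{S}}=-2\sigma_y$, $\mathrm{i}[\rho_d,H_1]|_{\mathcal{S}}=2\sigma_z$, hence $f(\rho)=2r_z$, and the Bloch equations $\dot r_x=8Kr_z^2$, $\dot r_y=0$, $\dot r_z=-8Kr_xr_z$ follow, conserving $r_x^2+r_z^2$ as they must. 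Your route is more elementary and more informative---it even yields the explicit trajectory $\theta(t)=2\arctan\big(\tanh(4Kt)\big)$ and therefore a convergence rate---whereas the paper's LaSalle argument is the one that generalizes to settings where no such low-dimensional reduction is available.
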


\begin{proof}
Consider the following Lyapunov function:
\begin{equation}\label{eg3}
    V(\rho)=\text{tr}(\Pi_d \rho),
\end{equation}
where $\Pi_d=I_4-\rho_d$. Next, we yield that
\begin{equation}\label{eg4}
    \dot{V}(\rho)=\text{tr}\big(\Pi_d(-\text{i})[H_0,\rho]\big)+u_1(\rho)\text{tr}\big(\Pi_d(-\text{i})[H_1,\rho]\big)
\end{equation}
On one hand, we have
\begin{equation}\label{eg5}
    \text{tr}(\Pi_d[H_0,\rho])=\text{tr}(\Pi_d H_0 \rho -\Pi_d \rho H_0)=\text{tr}(\Pi_d H_0 \rho -H_0 \Pi_d \rho)=\text{tr}\big([\Pi_d,H_0]\rho\big)=0.
\end{equation}
On the other hand, 
\begin{equation}\label{eg6}
    \text{tr}\big(\Pi_d(-\text{i})[H_1,\rho]\big)=\text{tr}\big(-\text{i}[\Pi_d,H_1]\rho\big)=\text{tr}\big(\text{i}[\rho_d,H_1]\rho\big).
\end{equation}
Combining (\ref{eg4}), (\ref{eg5}) and (\ref{eg6}), it follows that
\begin{equation}\label{eg7}
    \dot{V}(\rho)=-K\Big(\text{tr}\big(\text{i}[\rho_d,H_1]\rho\big)\Big)^2 \leq 0.
\end{equation}
By LaSalle's invariance principle, system trajectory $\rho(t;\rho_0,0)$ should converge to the largest invariant set contained in $\{\rho \in \mathcal{D}(\mathcal{H})|\dot{V}(\rho)=0\}$. Next, $\dot{V}(\rho)=0$ if and only if $\text{tr}\big(\text{i}[\rho_d,H_1]\rho\big)=0$, which is in turn equivalent to:
\begin{equation}\label{eg8}
    \langle 10|\rho|10\rangle= \langle 01|\rho|01\rangle.
\end{equation}
We further yield that
\begin{equation}\label{eg9}
    \begin{aligned}
    \langle 11|\dot{\rho}|11\rangle &=-\text{i}\text{tr}\big([H_0,\rho]\Pi_{11}\big)-\text{i}u_1 (\rho)\text{tr}\big([H_1,\rho]\Pi_{11}\big)\\
                                    &=-\text{i}\text{tr}\big([\Pi_{11},H_0]\rho\big)-\text{i}u_1 (\rho)\text{tr}\big([\Pi_{11},H_1]\rho\big)=0,
    \end{aligned}
\end{equation}
where $\Pi_{11}=|11\rangle\langle 11 |$. Similarly, we also yield that
\begin{equation}\label{eg10}
    \langle 00|\dot{\rho}|00\rangle =0.
\end{equation}
Since $\langle 11|\rho_0|11\rangle=0$ and $\langle 00|\rho_0|00\rangle=0$, it holds that
\begin{equation}\label{eg11}
    \begin{aligned}
    &\langle 11|\rho(t;\rho_0,0)|11\rangle =0;
    &\langle 00|\rho(t;\rho_0,0)|00\rangle =0,
    \end{aligned}
\end{equation}
for $t \geq 0$. Consequently, by semi-positivity, it is true that
\begin{equation}\label{eg12}
    \text{supp}\big(\rho(t;\rho_0,0)\big)=\{|10\rangle,|01\rangle\},
\end{equation}
for $t \geq 0$. Moreover, we have
\begin{equation}\label{eg13}
    \begin{aligned}
     \langle 10|\dot{\rho}|01\rangle&=-\text{i}\text{tr}\big([H_0,\rho]|01\rangle\langle 10|\big)-\text{i}u_1 (\rho)\text{tr}\big([H_1,\rho]|01\rangle\langle 10|\big)\\
                                    &=-\text{i}\text{tr}\big([|01\rangle\langle 10|,H_0]\rho\big)-\text{i}u_1 (\rho)\text{tr}\big([|01\rangle\langle 10|,H_1]\rho\big)\\
                                    &=-\text{i}u_1 (\rho)\text{tr}\big([|01\rangle\langle 10|,H_1]\rho\big)\\
                                    &=K(-2\langle 10|\rho|10\rangle+2\langle 01|\rho|01\rangle)^2\geq 0.
    \end{aligned}
\end{equation}
It is also true that
\begin{equation}\label{eg14}
    \langle 10|\dot{\rho}|01\rangle=\langle 01|\dot{\rho}|10\rangle \geq 0.
\end{equation}
Because $\langle 10|\rho_0|01\rangle=0$, and $\langle 01|\rho_0|10\rangle=0$, it is clear that
\begin{equation}\label{eg15}
    \begin{aligned}
    &\langle 10|\rho(t;\rho_0,0)|01\rangle\geq 0;
    &\langle 01|\rho(t;\rho_0,0)|10\rangle\geq 0,
    \end{aligned}
\end{equation}
for $t \geq 0$. 

The set $\mathcal{S}$ to which the system trajectory converges is restricted by (\ref{eg8}), (\ref{eg12}) and (\ref{eg15}). That is, $\forall \tilde{\rho} \in \mathcal{S}$, it should hold that
\begin{equation}\label{eg16}
    \begin{aligned}
    &\langle 10|\tilde{\rho}|10\rangle= \langle 01|\tilde{\rho}|01\rangle;\\
    &\text{supp}(\tilde{\rho})=\{|10\rangle,|01\rangle\};\\
    &\langle 10|\tilde{\rho}|01\rangle\geq 0;\\
    &\langle 01|\tilde{\rho}|10\rangle\geq 0,
    \end{aligned}
\end{equation}
We also note that $\frac{d}{dt}\text{tr}(\rho^2)=0$ holds for $t \geq 0$. Therefore, the system trajectory, which starts from pure state $\rho_0$, remains to be pure for $t \geq 0$. There is only one pure density operator, namely $\rho_d$, which satisfies (\ref{eg16}). As a result, the system trajectory must converge to $\rho_d$. $\hfill\square$
\end{proof}

Based on Proposition 1, we know that the trajectory of system (\ref{eg1}) starting from $\rho_0$ approaches $\rho_d$ in the long time limit. Such a trajectory, if realized in practice, would fulfill the task of preparing $\rho_d$. However, implementing (\ref{eg1}) with feedback protocol $u_1(\cdot)$ requires the real-time state. According to the procedures of closed-loop designed open-loop control, the dynamics of system (\ref{eg1}) is then simulated on a digital computer, yielding $\{\theta_j\}_{j=0}^N$ as an approximation. 

We move on to the open-loop control stage. The control input to be applied to the real system is:
\begin{equation}\label{eg17}
    \tilde{u}_1 (t)=u_1 (\theta_n),\quad t\in \Bigg[\frac{nT}{N},\frac{(n+1)T}{N}\Bigg).
\end{equation}
Next, we prepare an initial density operator $\sigma_0$ as close to $\rho_0$ as possible (perhaps by exciting one of the qubits and cooling the other, in terms of $|10\rangle\langle 10|$ in this example). The real system, which is subject to open-loop control $\tilde{u}$, is modelled by the following equation:
\begin{equation}\label{eg18}
    \sigma_N(t)=\sigma_0+\int_{0}^t -\text{i}[H_0+\tilde{u}_1(s)H_1,\sigma_N (s)]ds. 
\end{equation}

Several simulation results are then presented. Let us denote:
\begin{equation}\label{eg19}
    F(N,T) \triangleq e_N (T)-\mathcal{U}_{\rho_0}[T,0](\rho_0-\sigma_0).
\end{equation}
For this example, we set $T=1$ and $K=1$, and $\rho_0=|10\rangle\langle 10|$. Moreover, we assume that initialization is not perfect, setting $\sigma_0=0.95|10\rangle\langle 10|+0.05|00\rangle\langle 00|$. In terms of numerical approximation methods, we choose Runge-Kutta methods for approximation of (\ref{eg1}). Here, the $\text{S}^{\text{th}}$-order Runge-Kutta method is denoted as RKS and $\text{S} \in \{1,2,3,4,5\}$. Generally speaking, higher-order Runge-Kutta methods may achieve greater accuracy than lower-order ones in terms of approximating ordinary differential equations, at a price of being more complicated. We numerically simulate $F(N,1)$ for different values of $N$, applying RK1, RK2, RK3, RK4 and RK5. The simulated relationship between $F(N,1)$'s norm and $N$ is plotted, which is shown in Fig. \ref{fig1}.
\begin{figure}
    \centering
    \includegraphics[height=6.6cm]{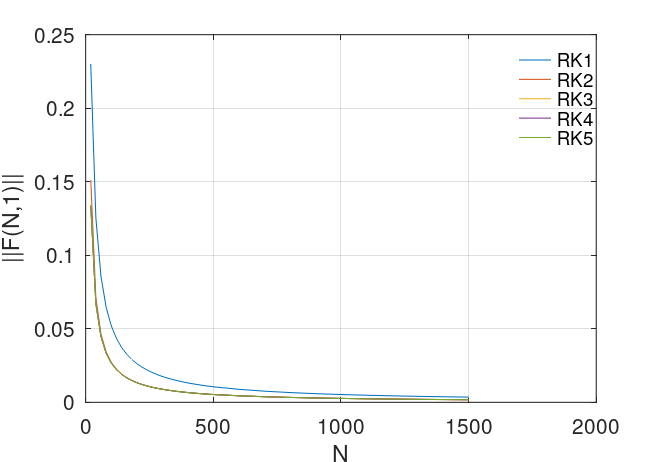}    
\caption{Simulated value of $\Vert F(N,1)\Vert$ versus the number of simulation grids.}  
    \label{fig1}
\end{figure}

From Fig. 1, it is observed that $\Vert F(N,1)\Vert$ approaches 0 as $N$ increases, for all of the five approximation methods applied here. This observation should be viewed in alliance with Theorem 1, which says that $e_N (T)$ converges to $\mathcal{U}_{\rho_0}[T,0](\rho_0-\sigma_0)$ in the long time limit once the accuracy of our chosen approximation method reaches above a certain level. 

However, it is also seen from Fig. 1 that the shapes of a large portion of the curves associated with RK2-RK5 are difficult to distinguish. In other words, the increase of complexity from RK2 to RK5 has \textbf{not} made the relevant curves approach the $N$-axis more rapidly. This observation is viewed in conjunction with Theorem 2, which implies that, as $N$ tends to infinity, the dominating factor of the convergence of $F(N,T)$ to 0 may be \textbf{irrelevant} with numerical approximation methods applied.

Moreover, it is seen that the curves in Fig. 1 resemble inverse proportionality functions. This observation is also viewed together with Theorem 2. On a less rigorous note, by moving the denominator of the l.h.s. of (\ref{s2}) to r.h.s. of the same equation, we arrive at an approximate inverse proportionality relation between $\Vert F(N,T)\Vert$ and $N$. 

Next, we numerically simulate the relation between $\Vert e_N(1)\Vert$ and $\frac{1}{N}$, as shown in Fig. 2.
\begin{figure}
    \centering
    \includegraphics[height=6cm]{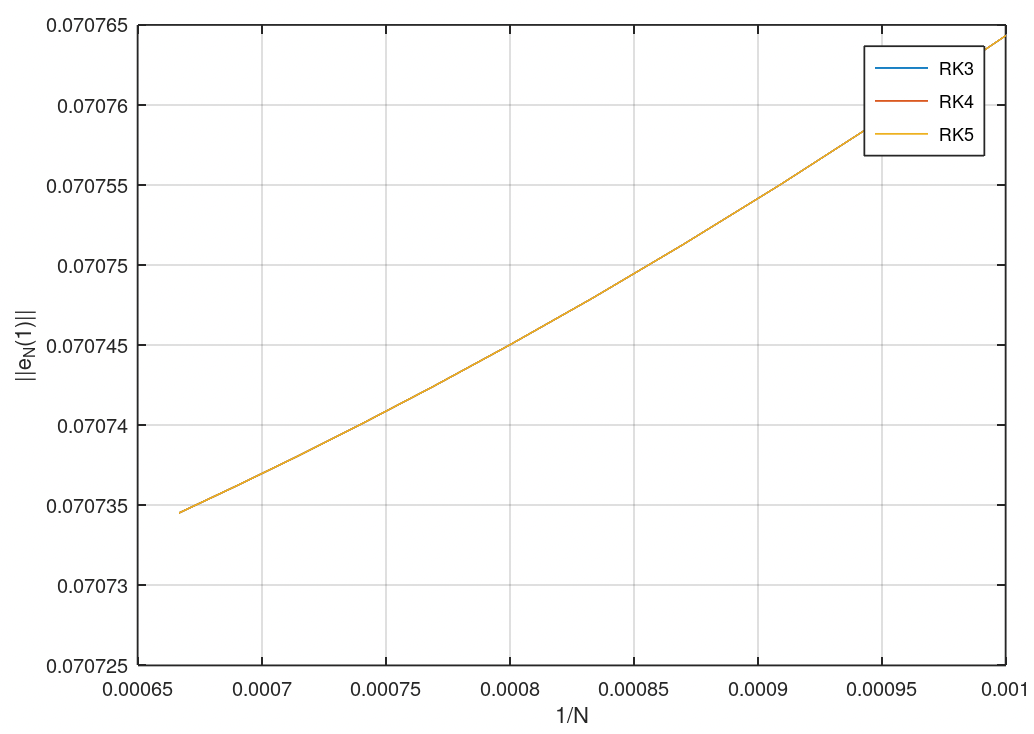}    
\caption{Simulated relation between $\Vert e_N(1)\Vert$ and $\frac{1}{N}$. }  
 \label{fig2}
\end{figure}
Since $T=1$, it is clear that $\frac{1}{N}$ represents the simulation step size. It is observed from Fig. \ref{fig2}
that the curves associated with RK3, RK4 and RK5 are highly overlapped and they all resemble a straight line. We view this observation together with Theorem 3, where (\ref{th3}) shows that, if $2\sum_{n=0}^{N-1}\Vert E(N,n) \Vert \frac{T}{N}$ is a high-order term regarding $\frac{T}{N}$, then the upper bound is roughly linear in the simulation step size for large $N$. 

\section{Conclusion}
We have presented an error analysis for closed-loop designed open-loop control of quantum systems. It is shown that, with a certain level of accuracy for approximating the time-varying Hamiltonian, the error converges to a limit related to initialization error as $N \to +\infty$. With a higher level of accuracy, we have shown that the dominating factor of convergence as $N \to +\infty$ is independent of the approximation method chosen. An upper bound of the error norm is presented, and an example is given to illustrate our results.

\textcolor{blue}{For a comparsion, existing works have covered the closed-loop design phase, e.g., in \cite{KUANG200898,CHEN2017439,CONG20209220,doi:10.1080/00207179.2016.1161830,NOURALLAH2022,KUANG2021109327}, closed-loop control laws have been considered. Acknowledging the fact that theoretically designed closed-loop (possibly nonlinear) systems may not be realized due to laws of quantum dynamics, our work considers the error induced by the application of closed-loop control laws to quantum systems in an open-loop fashion.}



\section{Acknowledgement}                               
This research is partially supported by Hong Kong Research Grant Council (Grants Nos. 15203619 and 15208418), Shenzhen Fundamental Research Fund, China, under Grant No. JCYJ20190813165207290, National Natural Science Foundation of China under Grant No. 62173269, and the CAS AMSS-polyU Joint Laboratory of Applied Mathematics.  

The authors would like to thank Assoc. Prof. Sen Kuang of University of Science and Technology of China for helpful discussions.
\appendix

\section{Proof of Eq. (\ref{appendixshown})}    
Consider the following equation:
\begin{equation}\label{derivative}
\begin{aligned}
\frac{\mathcal{U}_{\rho_0}[T,s](A(s))-\mathcal{U}_{\rho_0}[T,s_0](A(s_0))}{s-s_0}&=\frac{\mathcal{U}_{\rho_0}[T,s](A(s))-\mathcal{U}_{\rho_0}[T,s](A(s_0))}{s-s_0}\\
&+\frac{\mathcal{U}_{\rho_0}[T,s](A(s_0))-\mathcal{U}_{\rho_0}[T,s_0](A(s_0))}{s-s_0}.
\end{aligned}
\end{equation}
On one hand, let us denote
\begin{equation}
    \tilde{A}(s,s_0)\triangleq \frac{A(s)-A(s_0)}{s-s_0}.
\end{equation}
It is shown that
\begin{equation}\label{triangle}
    \begin{aligned}
    &\Vert\mathcal{U}_{\rho_0}[T,s](\tilde{A}(s,s_0))-\mathcal{U}_{\rho_0}[T,s_0](\dot{A}(s_0))\Vert=\Vert\mathcal{U}_{\rho_0}[T,s](\tilde{A}(s,s_0))-\mathcal{U}_{\rho_0}[T,s_0](\tilde{A}(s,s_0))\\ 
    &+\mathcal{U}_{\rho_0}[T,s_0](\tilde{A}(s,s_0))-\mathcal{U}_{\rho_0}[T,s_0](\dot{A}(s_0))\Vert\\ 
    &\leq \Vert\mathcal{U}_{\rho_0}[T,s]-\mathcal{U}_{\rho_0}[T,s_0]\Vert\cdot\Vert\tilde{A}(s,s_0)\Vert +\Vert\mathcal{U}_{\rho_0}[T,s_0]\Vert\cdot\Vert\tilde{A}(s,s_0)-\dot{A}(s_0)\Vert.
    \end{aligned}
\end{equation}
Therefore, it is clear from (\ref{triangle}) that
\begin{equation}\label{appendixfirst}
\begin{aligned}
\lim_{s\to s_0}\frac{\mathcal{U}_{\rho_0}[T,s](A(s))-\mathcal{U}_{\rho_0}[T,s](A(s_0))}{s-s_0}&=\lim_{s\to s_0}\mathcal{U}_{\rho_0}[T,s](\tilde{A}(s,s_0))\\
&=\mathcal{U}_{\rho_0}[T,s_0](\dot{A}(s_0)).
\end{aligned}
\end{equation}

On the other hand, 
\begin{equation}
\begin{aligned}
\frac{\mathcal{U}_{\rho_0}[T,s](A(s_0))-\mathcal{U}_{\rho_0}[T,s_0](A(s_0))}{s-s_0}&=\frac{\mathcal{U}_{\rho_0}[T,s](A(s_0))-\mathcal{U}_{\rho_0}[T,s]\big(\mathcal{U}_{\rho_0}[s,s_0](A(s_0))\big)}{s-s_0}\\
&=-\frac{\mathcal{U}_{\rho_0}[T,s]\big((\mathcal{U}_{\rho_0}[s,s_0]-\mathcal{I})(A(s_0))\big)}{s-s_0}.
\end{aligned}
\end{equation}
It thus follows that
\begin{equation}\label{appendixsecond}
\begin{aligned}
\lim_{s\to s_0}\frac{\mathcal{U}_{\rho_0}[T,s](A(s_0))-\mathcal{U}_{\rho_0}[T,s_0](A(s_0))}{s-s_0}&=-\lim_{s\to s_0}\frac{\mathcal{U}_{\rho_0}[T,s]\big((\mathcal{U}_{\rho_0}[s,s_0]-\mathcal{I})(A(s_0))\big)}{s-s_0}\\
&=-\lim_{s\to s_0}\mathcal{U}_{\rho_0}[T,s]\Bigg(\frac{(\mathcal{U}_{\rho_0}[s,s_0]-\mathcal{I})(A(s_0))}{s-s_0}\Bigg)\\
&=\mathcal{U}_{\rho_0}[T,s_0]\big(\text{i}[H_{\rho_0}(s_0),A(s_0)]\big).
\end{aligned}
\end{equation}
Finally, with the aid of (\ref{appendixfirst}) and (\ref{appendixsecond}), we have
\begin{equation}
\begin{aligned}
&\lim_{s\to s_0}\frac{\mathcal{U}_{\rho_0}[T,s](A(s))-\mathcal{U}_{\rho_0}[T,s_0](A(s_0))}{s-s_0}\\
&=\mathcal{U}_{\rho_0}[T,s_0](\dot{A}(s_0))+\mathcal{U}_{\rho_0}[T,s_0]\big(\text{i}[H_{\rho_0}(s_0),A(s_0)]\big),
\end{aligned}
\end{equation}
which completes the proof.

\section{Proof of Eq. (\ref{thm3_6})}
Let us denote
\begin{equation}
    -\text{i}[H_{\rho_0}(s),\rho(s;\rho_0,0)]\triangleq \mathcal{L}_H (s),
\end{equation}
for $s>0$. It is clear that, for $s>0$,
\begin{equation}
    \begin{aligned}
    &\big( \mathcal{L}_H (s) \big)_{ii}=\dot{\rho}_{ii}(s),\quad 1\leq i \leq N_S;\\
    &\big( \mathcal{L}_H (s) \big)_{N_S N_S}=-\sum_{n=1}^{N_S-1}\dot{\rho}_{nn}(s);\\
    &\big( \mathcal{L}_H (s) \big)_{jk}=\dot{\rho}_{jk}^R(s)+\text{i}\dot{\rho}_{jk}^I(s),\quad 1\leq j<k \leq N_S;\\
    &\big( \mathcal{L}_H (s) \big)_{jk}=\dot{\rho}_{kj}^R(s)-\text{i}\dot{\rho}_{kj}^I(s),\quad 1\leq k<j \leq N_S.
    \end{aligned}
\end{equation}
Next, we compute that (omitting variable ``$s$" hereafter):
\begin{equation}
\begin{aligned}
\text{tr}(D_{u_k} L_H)&=\sum_{j=1}^{N_S} (D_{u_k} L_H)_{jj}\\ 
                      &=\sum_{j=1}^{N_S} \sum_{k=1}^{N_S} (D_{u_k})_{jk} (L_H)_{kj}\\
                      &=\sum_{i=1}^{N_S-1}(D_{u_k})_{ii}(L_H)_{ii}+(D_{u_k})_{N_S N_S}(L_H)_{N_S N_S}\\
                      &+\sum_{1\leq j<k \leq N_S}(D_{u_k})_{jk} (L_H)_{kj}+\sum_{1\leq k<j \leq N_S}(D_{u_k})_{jk} (L_H)_{kj}.
\end{aligned}
\end{equation}

Firstly, we have
\begin{equation}\label{B.4}
\begin{aligned}
\sum_{i=1}^{N_S-1}(D_{u_k})_{ii}(L_H)_{ii}+(D_{u_k})_{N_S N_S}(L_H)_{N_S N_S}&=\sum_{i=1}^{N_S-1}\hat{u}_{ii}\dot{\rho}_{ii}+0 \cdot (L_H)_{N_S N_S}\\
&=\sum_{i=1}^{N_S-1}\hat{u}_{ii}\dot{\rho}_{ii}.
\end{aligned}
\end{equation}

Secondly, we derive that
\begin{equation}\label{B.5}
\begin{aligned}
\sum_{1\leq j<k \leq N_S}(D_{u_k})_{jk} (L_H)_{kj}&=\sum_{1\leq j<k \leq N_S}\Bigg(\frac{\hat{u}_{jk}^R +\text{i}\hat{u}_{jk}^I}{2} \Bigg)(\dot{\rho}_{jk}^R-\text{i}\dot{\rho}_{jk}^I)\\
&=\sum_{1\leq j<k \leq N_S}\frac{1}{2}(\hat{u}_{jk}^R \dot{\rho}_{jk}^R-\text{i}\hat{u}_{jk}^R \dot{\rho}_{jk}^I+\text{i}\hat{u}_{jk}^I \dot{\rho}_{jk}^R +\hat{u}_{jk}^I \dot{\rho}_{jk}^I).
\end{aligned}
\end{equation}

Thirdly, we yield that 
\begin{equation}\label{B.6}
\begin{aligned}
\sum_{1\leq k<j \leq N_S}(D_{u_k})_{jk} (L_H)_{kj}&=\sum_{1\leq k<j \leq N_S}\Bigg(\frac{\hat{u}_{kj}^R -\text{i}\hat{u}_{kj}^I}{2} \Bigg)(\dot{\rho}_{kj}^R+\text{i}\dot{\rho}_{kj}^I)\\
&=\sum_{1\leq k<j \leq N_S}\frac{1}{2}(\hat{u}_{kj}^R \dot{\rho}_{kj}^R+\text{i}\hat{u}_{kj}^R \dot{\rho}_{kj}^I-\text{i}\hat{u}_{kj}^I \dot{\rho}_{kj}^R +\hat{u}_{kj}^I \dot{\rho}_{kj}^I).
\end{aligned}
\end{equation}

Finally, we combine (\ref{B.4}), (\ref{B.5}), and (\ref{B.6}), yielding that
\begin{equation}\label{B.7}
    \text{tr}(D_{u_k} L_H)=\sum_{i=1}^{N_S-1}\hat{u}_{ii}\dot{\rho}_{ii}+\sum_{1\leq j<k \leq N_S}\big(\hat{u}_{jk}^R \dot{\rho}_{jk}^R+\hat{u}_{jk}^I \dot{\rho}_{jk}^I\big).
\end{equation}
The r.h.s. of (\ref{B.7}) agrees with the r.h.s. of (\ref{thm3_4}), which indicates that (\ref{thm3_6}) holds. The proof is completed.

 \bibliographystyle{elsarticle-num} 
 \bibliography{cas-refs}

\begin{thebibliography}{10}
\expandafter\ifx\csname url\endcsname\relax
  \def\url#1{\texttt{#1}}\fi
\expandafter\ifx\csname urlprefix\endcsname\relax\def\urlprefix{URL }\fi
\expandafter\ifx\csname href\endcsname\relax
  \def\href#1#2{#2} \def\path#1{#1}\fi

\bibitem{KUANG200898}
S.~Kuang, S.~Cong, {L}yapunov control methods of closed quantum systems,
  Automatica 44~(1) (2008) 98--108.

\bibitem{CHEN2017439}
M.~Chen, S.~Kuang, S.~Cong, Rapid {L}yapunov control for decoherence-free
  subspaces of {M}arkovian open quantum systems, Journal of the Franklin
  Institute 354~(1) (2017) 439--455.

\bibitem{CONG20209220}
S.~Cong, L.~Zhou, F.~Meng, {L}yapunov-based unified control method for closed
  quantum systems, Journal of the Franklin Institute 357~(14) (2020)
  9220--9247.

\bibitem{MIRRAHIMI20051987}
M.~Mirrahimi, P.~Rouchon, G.~Turinici, {L}yapunov control of bilinear
  {S}chrödinger equations, Automatica 41~(11) (2005) 1987--1994.

\bibitem{BEAUCHARD2007388}
K.~Beauchard, J.~M. Coron, M.~Mirrahimi, P.~Rouchon, Implicit {L}yapunov
  control of finite dimensional {S}chrödinger equations, Systems \& Control
  Letters 56~(5) (2007) 388--395.

\bibitem{PhysRevA.96.033803}
D.~Ran, Z.-C. Shi, J.~Song, Y.~Xia, Speeding up adiabatic passage by adding
  {L}yapunov control, Phys. Rev. A 96 (2017) 033803.

\bibitem{PhysRevA.102.022603}
D.~Ran, W.-J. Shan, Z.-C. Shi, Z.-B. Yang, J.~Song, Y.~Xia, Generation of
  nonclassical states in nonlinear oscillators via {L}yapunov control, Phys.
  Rev. A 102 (2020) 022603.

\bibitem{8481533}
S.~Kuang, D.~Dong, I.~R. Petersen, {L}yapunov control of quantum systems based
  on energy-level connectivity graphs, IEEE Transactions on Control Systems
  Technology 27~(6) (2019) 2315--2329.

\bibitem{doi:10.1080/00207179.2016.1161830}
H.~Silveira, P.~P. da~Silva, P.~Rouchon, Quantum gate generation for systems
  with drift in {U}(n) using {L}yapunov–{L}a{S}alle techniques, International
  Journal of Control 89~(12) (2016) 2466--2481.

\bibitem{HOU2014699}
S.~Hou, L.~Wang, X.~Yi, Realization of quantum gates by {L}yapunov control,
  Physics Letters A 378~(9) (2014) 699--704.

\bibitem{NOURALLAH2022}
G.~Nourallah, S.~Cong, Preparation of the {H}adamard and {CNOT} gates to
  realize the maximum entangled states, Journal of the Franklin Institute
  (2022).

\bibitem{5437308}
M.~R. James, J.~E. Gough, Quantum dissipative systems and feedback control
  design by interconnection, IEEE Transactions on Automatic Control 55~(8)
  (2010) 1806--1821.

\bibitem{doi:10.1063/1.4884300}
Y.~Pan, H.~Amini, Z.~Miao, J.~Gough, V.~Ugrinovskii, M.~R. James, Heisenberg
  picture approach to the stability of quantum {M}arkov systems, Journal of
  Mathematical Physics 55~(6) (2014) 062701.

\bibitem{6697323}
Y.~Pan, H.~Amini, Z.~Miao, J.~Gough, V.~Ugrinovskii, M.~R. James, Stability of
  quantum {M}arkov systems via {L}yapunov methods in the {H}eisenberg picture,
  in: 2013 Australian Control Conference, 2013, pp. 497--500.

\bibitem{8264475}
M.~F. Emzir, I.~R. Petersen, M.~J. Woolley, {L}yapunov stability analysis for
  invariant states of quantum systems, in: 2017 IEEE 56th Annual Conference on
  Decision and Control (CDC), 2017, pp. 5505--5510.

\bibitem{EMZIR2022110660}
M.~F. Emzir, M.~J. Woolley, I.~R. Petersen, Stability analysis of quantum
  systems: {A} {L}yapunov criterion and an invariance principle, Automatica 146
  (2022) 110660.

\bibitem{GE20121031}
S.~S. Ge, T.~L. Vu, C.~C. Hang, Non-smooth {L}yapunov function-based global
  stabilization for quantum filters, Automatica 48~(6) (2012) 1031--1044.

\bibitem{doi:10.1137/100801287}
S.~S. Ge, T.~L. Vu, T.~H. Lee, Quantum measurement-based feedback control: {A}
  nonsmooth time delay control approach, SIAM Journal on Control and
  Optimization 50~(2) (2012) 845--863.

\bibitem{DONG2022}
D.~Dong, I.~R. Petersen, Quantum estimation, control and learning:
  {O}pportunities and challenges, Annual Reviews in Control (2022).

\bibitem{1272601}
S.~Grivopoulos, B.~Bamieh, {L}yapunov-based control of quantum systems, in:
  42nd IEEE International Conference on Decision and Control (IEEE Cat.
  No.03CH37475), Vol.~1, 2003, pp. 434--438 Vol.1.

\bibitem{KUANG2017164}
S.~Kuang, D.~Dong, I.~R. Petersen, Rapid {L}yapunov control of
  finite-dimensional quantum systems, Automatica 81 (2017) 164--175.

\bibitem{KUANG2021109327}
S.~Kuang, X.~Guan, D.~Dong, Finite-time stabilization control of quantum
  systems, Automatica 123 (2021) 109327.

\bibitem{ZHAO20121833}
S.~Zhao, H.~Lin, Z.~Xue, Switching control of closed quantum systems via the
  {L}yapunov method, Automatica 48~(8) (2012) 1833--1838.

\bibitem{PhysRevA.86.022321}
S.~C. Hou, M.~A. Khan, X.~X. Yi, D.~Dong, I.~R. Petersen, Optimal
  {L}yapunov-based quantum control for quantum systems, Phys. Rev. A 86 (2012)
  022321.

\end{thebibliography}





\end{document}